\newlist{inlinelist}{enumerate*}{1}
\setlist*[inlinelist,1]{%
  label=(\roman*),
}
\definecolor{keywordcolor}{rgb}{0.7, 0.1, 0.1}   
\definecolor{tacticcolor}{rgb}{0.0, 0.1, 0.6}    
\definecolor{commentcolor}{rgb}{0.4, 0.4, 0.4}   
\definecolor{symbolcolor}{rgb}{0.0, 0.1, 0.6}    
\definecolor{sortcolor}{rgb}{0.1, 0.5, 0.1}      
\definecolor{attributecolor}{rgb}{0.7, 0.1, 0.1} 
\definecolor{Black}{HTML}{000000}
\definecolor{Gray}{HTML}{808080}
\definecolor{Magenta}{HTML}{FF00FF}
\definecolor{RubineRed}{HTML}{ED017D}
\definecolor{ForestGreen}{HTML}{028A0F}
\definecolor{MidnightBlue}{HTML}{006795}
\definecolor{Plum}{HTML}{92268F}
\definecolor{listingBG}{HTML}{FFFFCB}%
\definecolor{listingFrame}{HTML}{BBBB98}%
\definecolor{listingLineno}{rgb}{0.5,0.5,1.0}%
\definecolor{LightGrey}{rgb}{0.975,0.975,0.975}
\ttfamily\linespread{1.15}\small\lst@ifdisplaystyle\footnotesize\fi,
\lstdefinelanguage{solidity}{
, basicstyle=\ttfamily\linespread{1.15}\footnotesize\lst@ifdisplaystyle\footnotesize\fi
, commentstyle=\color{Gray}
, morecomment=[l]{//}
, morecomment=[s]{/*}{*/}
, escapechar=\$
, classoffset=0,
, keywordstyle=\color{NavyBlue}\bfseries
, morekeywords={assert,require,if,then,else,for,break,call,delegatecall,transfer,transferFrom,send,case, catch,continue,do,while,emit, new, return, revert, selfdestruct, try, with, throw, switch, suicide}
, classoffset=1
, keywordstyle=\color{YellowGreen}\bfseries
, morekeywords={external, implements, import, interface, internal, library, payable, pragma, private, protected, public, pure, returns, super, using, view}
, classoffset=2
, keywordstyle=\color{blue}
, morekeywords={function, constructor, contract, constant, struct, address, bool, byte, bytes, bytes1, bytes2, bytes3, bytes4, bytes5, bytes6, bytes7, bytes8, bytes9, bytes10, bytes11, bytes12, bytes13, bytes14, bytes15, bytes16, bytes17, bytes18, bytes19, bytes20, bytes21, bytes22, bytes23, bytes24, bytes25, bytes26, bytes27, bytes28, bytes29, bytes30, bytes31, bytes32, enum, int, int8, int16, int24, int32, int40, int48, int56, int64, int72, int80, int88, int96, int104, int112, int120, int128, int136, int144, int152, int160, int168, int176, int184, int192, int200, int208, int216, int224, int232, int240, int248, int256, mapping, string, uint, uint8, uint16, uint24, uint32, uint40, uint48, uint56, uint64, uint72, uint80, uint88, uint96, uint104, uint112, uint120, uint128, uint136, uint144, uint152, uint160, uint168, uint176, uint184, uint192, uint200, uint208, uint216, uint224, uint232, uint240, uint248, uint256, var, void, ether, finney, szabo, wei, days, hours, minutes, seconds, weeks, years}
, classoffset=3
, keywordstyle=\color{Plum}\bfseries
, morekeywords={balance, block, blockhash, instanceof, coinbase, difficulty, gaslimit, number, timestamp, msg, data, gas, sender, value, sig, value, now, tx, gasprice, origin}
}
\newcommand{\ifempty}[3]{%
  \ifthenelse{\isempty{#1}}{#2}{#3}%
}
\newcommand{\ifdots}[3]{%
  \ifthenelse{\equal{#1}{...}}{#2}{#3}%
}
\newcommand{\hidden}[1]{}
\newcommand{\keyterm}[1]{\textbf{\emph{#1}}}%
\newcommand*{\itemequation}[3][]{%
  \item
  \begingroup
    \refstepcounter{equation}%
    \ifx\\#1\\%
    \else  
      \label{#1}%
    \fi
    \sbox0{#2}%
    \sbox2{$\displaystyle#3\m@th$}%
    \sbox4{\@eqnnum}%
    \dimen@=.5\dimexpr\linewidth-\wd2\relax
    \ifcase
        \ifdim\wd0>\dimen@
          \z@
        \else
          \ifdim\wd4>\dimen@
            \z@
          \else 
            \@ne
          \fi 
        \fi
      \@latex@warning{Equation is too large}%
    \fi
    \noindent   
    \rlap{\copy0}%
    \rlap{\hbox to \linewidth{\hfill\copy2\hfill}}%
    \hbox to \linewidth{\hfill\copy4}%
    \hspace{0pt}
  \endgroup
  \ignorespaces 
}
\newcommand{\Real}[1]{\mathrm{Real}}
\newcommand{\codefont}{\fontsize{9}{9}\selectfont}
\newcommand{\contract}[1]{{\tt\codefont{\txColor{#1}}}}
\newcommand{\txcode}[1]{{\text{\tt\codefont{\txColor{#1}}}}}
\newcommand{\eg}{e.g.\@\xspace}
\newcommand{\ie}{i.e.\@\xspace}
\newcommand{\wrt}{w.r.t.\@\xspace}
\def\negcaptionspace{\vspace{-10pt}}
 \theoremstyle{plain}
 \newtheorem{thm}{Theorem}
 \newtheorem{prop}[thm]{Proposition}
 \theoremstyle{definition}
\def\pmvColor{\color{black}}
\newcommand{\pmvFmt}[1]{{\pmvColor{\tt #1}}}
\newcommand{\pmv}[2][]{\pmvFmt{#2}_{\pmvColor{#1}}\xspace}
\newcommand{\pmvA}[1][]{\pmv[{#1}]{A}} 
\newcommand{\Adv}{\pmv{Adv}} 
\def\txColor{\color{black}}
\DeclareMathAlphabet{\mathbfsf}{\encodingdefault}{\sfdefault}{bx}{n}
\newcommand{\waltok}[2]{#1\!:\!#2}
\newcommand{\walenum}[1]{[#1]}
\newcommand{\walpmv}[2]{{#1}\walenum{#2}}
\newcommand{\waldistrarrow}[1]{\approx_{\$}}
\definecolor{LightGrey}{rgb}{0.95,0.95,0.95}
\definecolor{keyword}{HTML}{7F0055}
\def\tokColor{\color{Magenta}}
\newcommand{\tokFmt}[1]{{\tokColor{\tt #1}}}
\newcommand{\tok}[2][]{\tokFmt{#2}_{\tokColor{#1}}\xspace}
\newcommand{\tokT}[1][]{\tok[{#1}]{t}}    
\newcommand{\priceT}[1]{\texttt{price}({#1})}
\newlength\replength
\newcommand\repfrac{.1}
\newcommand\rulewidth{.6pt}
\newcommand\tdashfill[1][\repfrac]{\cleaders\hbox to \replength{%
  \smash{\rule[\arraystretch\ht\strutbox]{\repfrac\replength}{\rulewidth}}}\hfill}
\newcommand\tdotfill[1][\repfrac]{\cleaders\hbox to \replength{%
  \smash{\raisebox{\arraystretch\dimexpr\ht\strutbox-.1ex\relax}{.}}}\hfill}
\newcommand{\contrAdvC}[2]{\mathcal{C}} 
\def\sysColor{\color{Black}}
\newcommand{\sysFmt}[1]{{\sysColor{#1}}}
\newcommand{\sysS}[1][]{\mathord{\sysFmt{\sigma}_{\sysColor{#1}}}}
\newcommand{\solcode}[1]{\lstinline[language=solidity,basicstyle={\ttfamily\small}]{#1}}
\begin{document}

\title{Certifying optimal MEV strategies with Lean}

\iftoggle{anonymous}
{
\author{\IEEEauthorblockN{Anonymous}
\IEEEauthorblockA{\textit{University of Nowhere}\\
Nowhere \\
noemail@nowhere.edu}
}
}
{
\author{\IEEEauthorblockN{Massimo Bartoletti}
\IEEEauthorblockA{\textit{University of Cagliari}\\
Cagliari, Italy \\
bart@unica.it}
\and
\IEEEauthorblockN{Riccardo Marchesin}
\IEEEauthorblockA{\textit{University of Trento}\\
Trento, Italy \\
riccardo.marchesin@unitn.it}

\and
\IEEEauthorblockN{Roberto Zunino}
\IEEEauthorblockA{\textit{University of Trento}\\
Trento, Italy \\
roberto.zunino@unitn.it}
}
} 

\maketitle

\begin{abstract}
Maximal Extractable Value (MEV) refers to a class of attacks to decentralized applications where the adversary profits by manipulating the ordering, inclusion, or exclusion of transactions in a blockchain.
Decentralized Finance (DeFi) protocols are a primary target of these attacks, as their logic depends critically on transaction sequencing.
To date, MEV attacks have already extracted billions of dollars in value, underscoring their systemic impact on blockchain security.
Verifying the absence of MEV attacks requires determining suitable upper bounds, \ie proving that no adversarial strategy can extract more value (if any) than expected by protocol designers.
This problem is notoriously difficult: the space of adversarial strategies is extremely vast, making empirical studies and pen-and-paper reasoning insufficiently rigorous.
In this paper, we present the first mechanized formalization of MEV in the Lean theorem prover. 
We introduce a methodology to construct machine-checked proofs of MEV bounds, providing correctness guarantees beyond what is possible with existing techniques.
To demonstrate the generality of our approach, we model and analyse the MEV of two paradigmatic DeFi protocols. 
Notably, we develop the first machine-checked proof of the optimality of sandwich attacks in Automated Market Makers, a fundamental DeFi primitive.
\end{abstract}

\begin{IEEEkeywords}
smart contracts, MEV, decentralized finance, interactive theorem proving, Lean 4
\end{IEEEkeywords}

\section{Introduction}
\label{sec:intro}

Public permissionless blockchains such as Ethereum currently handle billions of dollars in crypto-assets, controlled by smart contracts that implement increasingly complex financial applications.
In most cases, the underlying protocols of these blockchains do not enforce transaction order fairness, instead delegating the sequencing of user transactions to miners or validators.
This leaves decentralized applications vulnerable to Maximal Extractable Value (MEV) attacks, where adversaries manipulate transaction sequencing for profit.
While MEV extraction may have some beneficial effects --- such as reducing transaction fees~\cite{Weintraub22flashbot} --- its overall impact is  detrimental on the affected blockchains, undermining decentralization, transparency, and exacerbating network congestion~\cite{Torres21frontrunner,Qin21quantifying}.
%


MEV can be approached from two different perspectives, depending on whether one  plays the role of attacker or defender.
For an attacker, the fact that the value extracted is \emph{maximal} is not really relevant. 
What truly matters is having an efficient algorithm to bundle one's own transactions with those of other users in a way that guarantees profit.
For that purpose, the attacker can exploit known heuristics targeting specific contracts~\cite{Qin21fc,Daian20flash,Li23ccs}, or devise adaptive techniques that can potentially extract value from arbitrary contracts~\cite{BabelJ0KKJ23ccs}.
In both cases, the adversary wins if the value extracted exceeds the value paid to mount the attack.

Playing the role of defender is substantially harder: to guarantee that the value extractable from a contract is bounded by a given threshold $v$, one must ensure that no adversarial strategy --- among an \emph{infinite} set of possible strategies --- can extract more than $v$.
More abstractly, let $\mathrm{EV}(\sysS,v)$ be a predicate stating that in system state $\sysS$ the extractable value is bounded from above by $v$. 
Then, the adversary's task reduces to \emph{falsification}: finding a counterexample to $\mathrm{EV}(\sysS,v)$, by exhibiting a strategy that extracts some $v' > v$.
Instead, the defender task amounts to \emph{verification}: constructing a proof that $\mathrm{EV}(\sysS,v)$ indeed holds.  

From this perspective, establishing MEV requires the same fundamental ingredients as program verification, namely:
\begin{enumerate}

\item \label{item:intro:model} 
a formal model of the system under analysis, \ie, smart contracts executed on blockchains;

\item \label{item:intro:MEV}
a precise formalization of the property of interest, \ie, $\mathrm{EV}(\sysS,v)$;

\item \label{item:intro:proof}
a proof technique to determine whether a system satisfies or not the property. 

\end{enumerate}

While the literature proposes several formal models of  contracts at different levels of abstraction --- ranging from the low-level Ethereum Virtual Machine~\cite{Hildenbrandt18csf,Grishchenko18post,Cassez23fm} to the high-level  contract language Solidity~\cite{Crafa19fc,Jiao20sp,Marmsoler25fac} ---
the existing MEV formalizations and the associated proof techniques are not fully adequate for the purpose.
The problem is twofold:
\begin{itemize}

\item First, most existing definitions~\cite{Babel23clockwork,Salles21formalization,Mazorra22price} lack the precision needed to establish MEV, as they often omit key aspects of adversarial capabilities such as eavesdropping the transaction mempool~\cite{BZ25fc}.
We emphasize that, from a defender's perspective, the accuracy of a MEV formalization is crucial to ensure that no class of attacks is overlooked.  

\item Second, as noted above, establishing precise MEV bounds requires considering \emph{all} sequences of transactions that an adversary can construct using their private knowledge and the content of the transaction mempool.
Although in many cases this infinite set of adversarial strategies can be partitioned into a finite number of equivalence classes, the resulting combinatorial explosion of cases and subcases quickly becomes unmanageable. This explosion occurs even for relatively simple contracts, where just a handful of system variables already gives rise to a vast and intricate strategy space, well beyond the reach of reliable pen-and-paper proofs.

\end{itemize}

To address these challenges, it is necessary to devise a MEV formalization that is 
\begin{inlinelist}
\item precise enough to capture all possible adversarial strategies,
\item flexible enough to accommodate a wide range of use cases, and 
\item amenable to rigorous, machine-verified proofs that go beyond manual analysis.
\end{inlinelist}
Proof assistants provide a natural framework to meet these requirements, 
as they enable precise formalizations of software systems and support the construction of machine-verified proofs whose correctness is guaranteed beyond any reasonable doubt. 
In the context of MEV, they offer the potential to reason about adversarial strategies, automate parts of the verification process, and ensure a level of rigour that manual analysis cannot achieve.
However, despite its central role in the security of decentralized applications, no mechanized formalization of MEV has been developed so far. 
Existing studies rely on informal arguments, which are insufficient to provide guarantees of correctness. This leaves a critical gap between the theoretical understanding of MEV and the practical need for security of decentralized applications.



\paragraph*{Contributions}
This paper addresses the previous research questions by providing the following key contributions:
\begin{enumerate}

\item we provide the first fully mechanised formalization of MEV
in a proof assistant (\S\ref{sec:model}). To this purpose, we adopt Lean~4~\cite{deMoura21cade}, an open-source theorem prover and programming language that has been successfully applied to construct and verify large-scale proofs~\cite{mathlib}. Its extensive mathematical library makes Lean 4 particularly well suited for reasoning about DeFi contracts, which often involve complex mathematical manipulations.

\item we devise a new proof technique for establishing MEV. 
Roughly, given a system state $\sysS$ where MEV is to be estimated, our proof technique requires the defender to provide a ``guess'' function mapping each state to a candidate MEV amount.
The defender must then prove that
\begin{inlinelist}
\item the guess for the initial state is an under-approximation of MEV, and 
\item any adversarial move affects the guess 
by an amount which is bounded by the gain of the move.
\end{inlinelist}
We establish that our proof technique is \emph{sound} --- \ie, when such a guess function exists, it actually gives the MEV ---
and \emph{complete} --- \ie, when the MEV exists, a guess function exists.

\item we apply our proof technique to 
two paradigmatic case studies.
The first is a gambling game in which players can win a prize by depositing suitable amounts of tokens.
Despite its apparent simplicity, this case study highlights how different adversarial strategies can emerge depending on the contents of the transaction mempool (\S\ref{sec:coinpusher}). 
The second contract is an Automated Market Maker (AMM), one of the cornerstones of Decentralized Finance~\cite{Angeris20aft,Werner22aft}. 
Although the MEV of AMMs has been studied before~\cite{Zhou21high,BCL22fc,Xu23csur,Kulkarni23wine}, our work provides the first mechanized proof that so-called \emph{sandwich attacks}  extract the maximal possible value (\S\ref{sec:amm}).

\end{enumerate}

Our Lean implementation, including all the proofs and case studies are available online in \iftoggle{anonymous}{an \href{https://anonymous.4open.science/r/MEV-formal-EF68}{anonymous public repository}~\cite{mev-lean-github-anonymous}}{a \href{https://github.com/r-marche/MEV-formal}{public repository}~\cite{mev-lean-github}}
consisting of ${\sim}$7000 lines of code.  



\section{Background on MEV}
\label{sec:background}

At an abstract level, we can see a blockchain as a transition system, where states represent both users' wallets (\ie, their token holdings) and the states of deployed contracts (including their balances).
State transitions are triggered by \emph{transactions} sent by users:
a transaction may affect the sender's wallet, the state of the called contracts, and the wallets receiving tokens from those contracts (if any).

As a concrete example, consider an \emph{Airdrop} contract that allows any user to withdraw tokens from its balance. 
We specify this contract in Solidity in~\Cref{fig:airdrop}.
Deploying the contract requires the sender to transfer any positive amount  of tokens from its wallet to the contract.
In this case, the tokens correspond to the blockchain native cryptocurrency (\eg, ETH on Ethereum), and they are quantified by the expression \solcode{msg.value}.
Besides the constructor, the contract only features another function, \solcode{drop}, which allows any user (identified by \solcode{msg.sender}) to withdraw an arbitrary fraction of the contract balance. 
The contract state only consists of its balance, which is implicitly updated when receiving tokens (in the \solcode{constructor}) or sending them (in the \solcode{transfer} command). 

MEV quantifies the maximal gain that an adversary can obtain by exploiting their power to reorder, drop, or insert transactions in the blockchain.
To this end, the adversary can play both pending user transactions in the public mempool, and may also inject its own crafted transactions, potentially leveraging knowledge of the mempool contents.

In our working example, assume a system state $\sysS$ consisting of the sole Airdrop contract with a balance of $n > 0$ tokens, which for simplicity we assume to have unitary price. 
Any adversary can simply fire a \solcode{drop(n)} transaction to empty the contract balance: therefore, the MEV in $\sysS$ is exactly $n$. 
Note that the Airdrop's MEV is not necessarily ``bad MEV'', since the extraction of value is aligned with the intended functionality of the contract.
However, an adversary with transaction sequencing powers can always front-run \solcode{drop} transactions of honest users, depriving them of any gain.
In the rest of the section, we will examine cases of ``bad MEV'' where the adversary causes a loss to honest users.

\begin{figure}
\begin{lstlisting}[language=solidity,frame=single]
contract AirDrop {
  constructor() payable { // receive tokens from the sender
      require (msg.value > 0); 
  }
  function drop(uint v) public {
      address payable rcv = payable(msg.sender);
      rcv.transfer(v); // transfer tokens to the sender
  }
}\end{lstlisting}
\negcaptionspace
\caption{An Airdrop contract in Solidity.}
\label{fig:airdrop}
\end{figure}

In the previous case, the adversary does not need to exploit the mempool, but just to front-run other users' transactions (this is called \emph{displacement attack} in~\cite{Eskandari19sok}).
More sophisticated forms of MEV arise when the adversary exploits pending transactions in the mempool, \eg by constructing bundles that combine users' transactions with adversarial ones.
We will see in the rest of the section how these strategies make the estimation of MEV increasingly more complex.

\begin{figure}
\begin{lstlisting}[language=solidity,frame=single]
contract CoinPusher {
  function push() public payable {
    if (address(this).balance >= 100) {
      address payable rcv = payable(msg.sender);
      rcv.transfer(address(this).balance);
    }
  }
}\end{lstlisting}
\negcaptionspace
\caption{A CoinPusher contract in Solidity.}
\label{fig:coinpusher}
\end{figure}

\subsection{The CoinPusher contract}
\label{sec:background:coinpusher}

We now consider a case where extracting MEV
requires the adversary to leverage transactions pending in the mempool.
The \emph{CoinPusher} contract transfers its entire balance to any user whose deposit 
causes the balance to exceed $100$ tokens (~\Cref{fig:coinpusher}).
The contract has a single function \solcode{push}, which receives from the sender any amount \solcode{msg.value} of tokens (this transfer happens implicitly along with the call). 
If the incoming tokens make the contract balance exceed 100 units, the entire balance (including the incoming tokens) is immediately sent to the caller through the \solcode{transfer} command. 

Let $\sysS$ be a system state where the contract holds $0$ tokens
and a user $\pmvA$ holds $1$ token. 
We assume the adversary $\Adv$ to be \emph{wealthy}, \ie endowed with enough tokens to mount any feasible attack
(in practice, $\Adv$ can also obtain such tokens as a very short-term loan, even if that might cost of some additional fees).
If the mempool in $\sysS$ is empty, then $\Adv$ cannot extract any value, hence ${\rm MEV}(\sysS) = 0$.
Now suppose that the mempool contains a transaction sent by $\pmvA$ that calls \txcode{push} while transferring 1 token ---
written \mbox{$\pmvA:\txcode{push(value=1)}$}. 
In this case, $\Adv$ can atomically execute the transaction bundle:
\[
\pmvA:\txcode{push(value = 1)} \quad \Adv:\txcode{push(value = 99)}
\]
which yields a gain of $1$ to $\Adv$. 
Since no strategy achieves a higher gain, we conclude that ${\rm MEV}(\sysS) = 1$.
Devising an optimal strategy for an arbitrary contract when the mempool contains many transactions is hard, also due to the combinatorial explosion of possible interleavings. 
In the CoinPusher case, the adversary's optimal strategy is to include the pending mempool transactions while interleaving its own \solcode{push} calls. 
In the idealized case where there are no transaction fees, this can yield a gain equal to the contract balance in $\sysS$ plus the total value of those mempool transactions with value $<100$. 

\begin{figure}[t]
\begin{lstlisting}[language=solidity,frame=single]
contract AMM {
  uint r0, r1; // reserves of token types t0, t1
  
  constructor(uint n0, uint n1) {
    t0.transferFrom(msg.sender, address(this), n0);
    t1.transferFrom(msg.sender, address(this), n1);
    r0 = n0; r1 = n1;
  }
  function swap0(uint n0, uint x1_min) {
    t0.transferFrom(msg.sender, address(this), n0);
    x1 = (n0 * r1) / (r0 + n0); // compute output tokens
    require x1>=x1_min && x1<r1;
    t1.transfer(msg.sender, x1); 
    r0 = r0 + n0; r1 = r1 - x1;
  }
  function swap1(uint n1, uint x0_min) {
    // symmetric to swap0
  }
}
\end{lstlisting}
\negcaptionspace
\caption{A constant-product AMM contract in Solidity (simplified).}
\label{fig:amm}
\end{figure}

\subsection{Automated Market Makers}
\label{sec:background:amm}

We now consider an Automated Market Maker (AMM), an archetypal decentralized finance (DeFi) primitive that enables users to swap between two token types according to an algorithmically defined exchange rate~\cite{Angeris20aft,Angeris21analysis,BCL22amm}.

For illustration, we present in~\Cref{fig:amm} a simplified Solidity code of the AMM contract (an actual implementation is substantially more complex, \eg because it has to deal with rounding of integer operations).
The \solcode{constructor} initializes the reserves of tokens $\tokT[0]$ and $\tokT[1]$, which are transferred from the sender's wallet to the contract.
The function \solcode{swap0} allows anyone to send \solcode{n0} units of $\tokT[0]$ and receive at least \solcode{x1_min} units of $\tokT[1]$.
The function \solcode{swap1} is symmetric: it takes $\tokT[1]$ as input and outputs $\tokT[0]$.
The exchange rate follows the mechanism of Uniswap v2~\cite{uniswapimpl}, which maintains the product of reserves (\ie, \mbox{\solcode{r0} $\cdot$ \solcode{r1}}) constant across swaps. 
This guarantees that the marginal exchange rate (\ie, the one applied for infinitesimal swaps) coincides with the ratio between the reserves.

When external prices are not aligned with the marginal exchange rate --- \ie, when 
\mbox{\solcode{r0}$\,\cdot\, \priceT{\tokT[0]} \neq$ \solcode{r1}$\,\cdot\,  \priceT{\tokT[1]}$} ---
a strategy to extract value is to perform an \emph{arbitrage}, \ie fire a swap that realigns the reserves with external prices.
For example, let $\sysS$ be a state where $\priceT{\tokT[0]} = 4$,  $\priceT{\tokT[1]} = 9$, and the AMM reserves are of $6$ units of each token,
written: 
\[
\sysS =
\walpmv{\contract{AMM}}{\waltok{6}{\tokT[0]},\waltok{6}{\tokT[1]}}
\mid
\walpmv{\Adv}{\waltok{n_0}{\tokT[0]},\waltok{n_1}{\tokT[1]}}  
\] 
If the adversary calls \mbox{\txcode{swap0}(3,0)}, the state becomes:  
\[
\walpmv{\contract{AMM}}{\waltok{9}{\tokT[0]},\waltok{4}{\tokT[1]}}
\mid
\walpmv{\Adv}{\waltok{n_0-3}{\tokT[0]},\waltok{n_1+2}{\tokT[1]}}  
\] 
The resulting AMM is balanced, and the adversary has paid 3 units of $\tokT[0]$ (with value $3 \cdot 4 = 12$) 
to buy $2$ units of $\tokT[1]$ (with value $2 \cdot 9 = 18$).
So, overall $\Adv$ has gained $18 - 12 = 6$.
One of our contributions (\S\ref{sec:amm:empty-mempool}) is a machine-checked proof that arbitrage is indeed the optimal MEV-extracting strategy when the mempool is empty. 

If the mempool is nonempty, more complex strategies arise, which potentially give a higher MEV.
A typical strategy is the so-called \emph{sandwich attack}~\cite{Zhou21high},
which we illustrate in the same state $\sysS$ as before. 
Assume that a honest user $\pmvA$ has sent an arbitrage transaction
\mbox{$\pmvA:\txcode{swap0}(3,1)$} to the mempool.
Here, $\pmvA$ has set the parameter \solcode{x1_min} to $1$, enforcing a lower bound on the number of $\tokT[1]$ units received from the swap. 
This safeguard is meant to account for the uncertainty of the state in which a transaction will be actually executed --- an inherent problem in account-based blockchains. 
In a sandwich attack, the adversary has access to the mempool
(containing $\pmvA$'s arbitrage), and uses it to construct the following transaction bundle:
\begin{enumerate}
  
\item \mbox{$\Adv:\txcode{swap0}(3,0)$}, an arbitrage transaction performed by the adversary;
  
\item \mbox{$\pmvA:\txcode{swap0}(3,1)$}, the transaction picked from the mempool. 
Since $\pmvA$'s transaction is executed in a state where the AMM is in equilibrium, $\pmvA$ will not receive the 2 units they would have expected in $\sysS$: rather, they receive the minimum of $\tokT[1]$ units admitted by the constraint \solcode{x1_min}, \ie only 1 unit.
This causes $\pmvA$ to have a negative gain;

\item \mbox{$\Adv:\txcode{swap1}(1,3)$}.
The adversary closes the sandwich with another arbitrage transaction,
after which the AMM reaches again the equilibrium.
      
\end{enumerate}

Overall, executing this transaction bundle in $\sysS$ leads to:
\begin{align*}
& \walpmv{\contract{AMM}}{\waltok{6}{\tokT[0]},\waltok{6}{\tokT[1]}}
\mid
\walpmv{\Adv}{\waltok{n_0}{\tokT[0]},\waltok{n_1}{\tokT[1]}}
\mid 
\cdots
\\
\xrightarrow{(1)} \;
& \walpmv{\contract{AMM}}{\waltok{9}{\tokT[0]},\waltok{4}{\tokT[1]}}
\mid
\walpmv{\Adv}{\waltok{n_0-3}{\tokT[0]},\waltok{n_1+2}{\tokT[1]}}
\mid 
\cdots
\\
\xrightarrow{(2)} \;
& \walpmv{\contract{AMM}}{\waltok{12}{\tokT[0]},\waltok{3}{\tokT[1]}}
\mid
\walpmv{\Adv}{\waltok{n_0-3}{\tokT[0]},\waltok{n_1+2}{\tokT[1]}}
\mid 
\cdots
\\
\xrightarrow{(3)} \;
& \walpmv{\contract{AMM}}{\waltok{9}{\tokT[0]},\waltok{4}{\tokT[1]}}
\mid
\walpmv{\Adv}{\waltok{n_0}{\tokT[0]},\waltok{n_1+1}{\tokT[1]}}
\mid 
\cdots
\end{align*}

This gives the adversary a gain of $1 \cdot \priceT{\tokT[1]} = 9$, which is greater than the gain obtainable without exploiting the mempool, and it actually turns out to be the MEV in $\sysS$.
More in general, devising the optimal MEV-extracting strategy depends on a multitude of factors: the AMM reserves, the external token prices, the direction of the swap in the mempool (\ie, \solcode{swap0} \emph{vs.} \solcode{swap1}), as well as the swap amount and lower bound. 
The resulting combinatorial explosion of cases and subcases makes manual reasoning about MEV extremely error-prone.
Our machine-checked proof in Lean (\S\ref{sec:amm:nonempty-mempool}) addresses this complexity, by establishing the MEV when the adversary can exploit a transaction from the mempool. 
\section{System model}
\label{sec:model}

In this section, we introduce our Lean formalization of smart contracts executed on blockchains (\S\ref{sec:model:systate}). 
We illustrate it by formalizing the \emph{Airdrop} contract presented in~\S\ref{sec:background} (\S\ref{sec:model:airdrop}).

\subsection{System state}
\label{sec:model:systate}

A \emph{wallet} is a container of tokens, possibly of different types.
We model wallets as functions from token types to real-valued amounts.
We explicitly separate the adversary from honest users, reflecting the different assumptions we make about their capabilities.
In particular, we assume our adversary to be \emph{wealthy}, \ie, able to spend arbitrarily large amounts of tokens when mounting an attack. Honest participants, instead, own a limited amount of tokens, as in the real world.
We formalize these assumptions by defining two distinct wallet types: 
\lstinline{Wallet}, holding a non-negative amount of tokens for honest participants and contracts, and \lstinline{WalletAdv}, holding arbitrary token amounts (possibly negative) for the adversary.
We parametrize these types by \lstinline{Token}, a type representing the token types:

\begin{lstlisting}
def Wallet    (Token : Type) : Type := Token → ℝ≥0
def WalletAdv (Token : Type) : Type := Token → ℝ
\end{lstlisting}

We parameterise our model with a type \lstinline{State}, which represents the state of an arbitrary contract running in an honest environment.
This parameter will be instantiated when defining specific contracts, \eg in~\S\ref{sec:model:airdrop}, \S\ref{sec:coinpusher} and \S\ref{sec:amm}.
Intuitively, such \lstinline{State} comprises the contract variables, its wallet, the wallets of the honest participants, and the mempool.

We model the interactions between the adversary, the honest participants, and the contract as a state transition system.
Its states are types \lstinline{SysState}, consisting of a \lstinline{State} and the adversary's wallet:
\begin{lstlisting}
structure SysState {Token State : Type} where
  Δ : WalletAdv Token
  s : State
\end{lstlisting}

The rules of the transition system are defined by the structure \lstinline{System} (\Cref{fig:lean:System}), which specifies the types:
\begin{itemize}
    \item \lstinline{Token}, representing the possible token types exchanged within the system;
    \item \lstinline{State}, mentioned above;
    \item \lstinline{Move}, representing the possible adversarial moves, 
    \eg, crafting and executing a transaction, or fetching from the mempool a transaction sent from an honest participant and executing it.
\end{itemize}

A \lstinline{System} must provide a \lstinline{semantics} for adversarial moves (note that honest participant's moves are already taken into account by the transactions in the mempool).
The \lstinline{semantics} is a \emph{partial} function, mapping a \lstinline{SysState} and a \lstinline{Move} to a new \lstinline{SysState}. 
The semantics is undefined (\lstinline{none}) when it is impossible to perform the given move.
In practice, this corresponds to the case where a transaction reverts.

In order to prove general properties on \lstinline{System}s --- \ie, properties that do no depend on the specific instantiation of its semantics ---
we require a few more fields and assumptions.
First, \lstinline{System} must provide a function (\lstinline{honTokens}) mapping each \lstinline{State} to the cumulative amount of tokens owned by the  contract and honest participants
(hereafter, referred to as the ``honest tokens''). 
%
Second, we postulate that the \lstinline{semantics} preserves tokens: we model this through the assumption \lstinline{preserveTokens}, which forbids the minting and the burning of tokens.
Finally, a \lstinline{System} must associate each token type with a \emph{price}:
formally, this is modelled as a function \lstinline{tokenValue} that
maps any wallet to a real number, denoting the cumulative price of all the tokens in the wallet.
We require this value function to be non-negative and additive.

\begin{figure}[t]
\begin{lstlisting}[frame=single]
structure System where
  Token : Type
  State : Type
  Move  : Type
  semantics : @SysState Token State → Move →
    Option (@SysState Token State)
  honTokens : State → Wallet Token
  preserveTokens : ∀ σ m,
    match semantics σ m with
    | .none    => True
    | .some σ' => ∀ τ, honTokens σ.s τ + σ.Δ τ
                     = honTokens σ'.s τ + σ'.Δ τ               
  tokenValue : WalletAdv Token → ℝ
  tokenValue_nonneg : ∀ f, f ≥ 0 → tokenValue f ≥ 0
  tokenValue_additive : ∀ f g,
    tokenValue (f + g) = tokenValue f + tokenValue g
\end{lstlisting}
\negcaptionspace
\caption{\lstinline{System} model in Lean.}
\label{fig:lean:System}
\end{figure}

Given \lstinline{sys : System}, we can then prove a few basic properties. For instance, we establish that the value of the empty wallet is zero, and that the value function preserves subtraction and is monotonic on wallets.

\vbox{
\begin{lstlisting}
theorem tokenValue_zero : sys.tokenValue 0 = 0
theorem tokenValue_sub (f g : WalletAdv _) : sys.tokenValue (f - g) 
  = sys.tokenValue f - sys.tokenValue g
theorem tokenValue_monotonic (f g : WalletAdv _) :
  f ≤ g → sys.tokenValue f ≤ sys.tokenValue g
\end{lstlisting}
}

For readability, hereafter we abbreviate \lstinline{SysState}, instantiated with \lstinline{sys.Token} and \lstinline{sys.State}, as \lstinline{sys.sysState}:
\begin{lstlisting}
abbrev System.sysState : Type :=
  @SysState sys.Token sys.State
\end{lstlisting}

\subsection{Example: formalization of the Airdrop contract}
\label{sec:model:airdrop}

We exemplify our Lean formalization to model the Airdrop contract introduced in~\Cref{sec:background}.
We start by providing some general definitions regarding the participants, the exchanged tokens, and the the transactions.
This skeleton will be reused also for the other use cases.

We define a type \lstinline{Participant} to describe who can interact with the contract. The type comprises infinitely many honest participants and an adversary.

\vbox{%
\begin{lstlisting}
inductive Participant
| Hon : String → Participant
| Adv : Participant
\end{lstlisting}
}

The type \lstinline{Token} denotes the token types exchangeable through the contract. 
Our Airdrop uses a single token type:

\vbox{
\begin{lstlisting}
inductive Token
| τ₀ : Token
\end{lstlisting}
}

Coherently with the Solidity code in~\Cref{fig:airdrop}, 
the Airdrop contract features a single function \lstinline{drop}, which allows any participant \lstinline{P} to 
withdraw any amount \lstinline{v} $> 0$ of tokens from the contract,
failing if \lstinline{v} exceeds the contract balance.
Correspondingly, transactions will take the following form:

\vbox{
\begin{lstlisting}
inductive Tx
| drop (P : Participant) (v : ℝ+) : Tx
\end{lstlisting}
}

The argument \lstinline{P} in a \lstinline{drop} transaction implicitly denotes that the transaction is signed by \lstinline{P}.
We assume that transactions are not \emph{malleable}, \ie the adversary cannot alter the parameter \lstinline{v} without invalidating the signature.

The \lstinline{State} type is a structure that denotes the state of the  blockchain without the adversary. More specifically, the field \lstinline{bal} represents the contract's balance, while the field \lstinline{wal} denotes the aggregated wallets of all honest participants.
Since MEV analysis does not require distinguishing between individual honest users, their holdings are abstracted into this single cumulative wallet.
%
The field \lstinline{mempool} represents the transactions previously broadcast by honest participants but not yet finalised on-chain.
Formally, we represent the \lstinline{mempool} as an association list 
mapping transaction identifiers (\lstinline{TxId}) with their associated transactions.

\vbox{
\begin{lstlisting}
structure State where
  bal : Token → ℝ≥0
  wal : Wallet Token
  mempool : AssocList TxId Tx
\end{lstlisting}
}

We abbreviate the associated system state as \lstinline{ADState}:
\begin{lstlisting}
abbrev ADState := @SysState State Token
\end{lstlisting}

We now turn to defining the type of adversarial moves.
To this aim, we start by identifying the subset of transactions that can be crafted by the adversary alone. In our scenario, the adversary can only craft \lstinline{drop} transactions signed by \lstinline{Adv} itself.
This is formalized by property \lstinline{advTx}.

\vbox{
\begin{lstlisting}
inductive advTx : Tx → Prop
| drop {v} : advTx (.drop .Adv v)
\end{lstlisting}
}

The \lstinline{Move} type captures all possible adversarial actions, namely:
\begin{inlinelist}
\item craft a transaction using its own knowledge (\ie, $\Adv$'s private key) and append it directly to the blockchain (\lstinline{adv}), or 
\item select a transaction from the mempool and include it in the blockchain (\lstinline{mempool}).
\end{inlinelist}

\vbox{
\begin{lstlisting}
inductive Move
| adv : (t : Tx) → advTx t → Move
| mempool : TxId → Move
\end{lstlisting}
}

The semantics of a \lstinline{Move} is to cause an \lstinline{ADState} update, or fail, coherently with \lstinline{System.semantics}.
More precisely, an \lstinline{adv} move transfers the specified amount of tokens to the adversary, removing tokens from the contract balance (\lstinline{bal}) and adding them to the adversary's (\lstinline{Δ}, in the system state), accordingly.
If there are not enough tokens in \lstinline{bal}, the move has no effect.

Instead, a \lstinline{mempool} move looks up the corresponding transaction in the \lstinline{State.mempool} field. 
If there is no such transaction, the move has no effect. 
Otherwise, if we find a corresponding \lstinline{drop P v} transaction, we attempt to execute it, sending \lstinline{v} tokens to \lstinline{P}. 
If there are not enough tokens in \lstinline{bal}, the move has no effect. 
Otherwise, we update the contract balance (\lstinline{bal}) and \lstinline{P}'s wallet.
We distinguish between two cases: if \lstinline{P} is honest, then we update 
\lstinline{wal}, while if it is the adversary we update \lstinline{Δ}.
%
%
If the mempool transaction succeeds, it is removed from the \lstinline{mempool} field so that it is not reused later on.

Using the above semantics, we define the \lstinline{Airdrop} type as a \lstinline{System} modelling the whole contract.
While doing that, we define the \lstinline{Airdrop.honTokens} field, which counts all the circulating tokens not owned by \lstinline{Adv}, by essentially summing \lstinline{bal} and \lstinline{wal}.
We also price the token type $\tau_0$, thereby defining the \lstinline{Airdrop.tokenValue} function which assigns a value to any wallet.
We finally prove the required properties according to the other \lstinline{System} fields, \ie, \lstinline{preserveTokens}, \lstinline{tokenValue_nonneg}, and \lstinline{tokenValue_additive}.

\begin{lstlisting}
def Airdrop : System where …
\end{lstlisting}

Based on this system model, we will analyse the MEV of the Airdrop contract in~\S\ref{sec:mev:airdrop}. 

\section{MEV}
\label{sec:mev}

In this~\namecref{sec:mev} we present our Lean formalization of MEV and a general proof technique for certifying the MEV of contracts.
We start in~\S\ref{sec:mev:gain} by defining the \emph{gain} of the adversary upon performing moves, which is the basis for defining MEV later in~\S\ref{sec:mev:mev}.
In~\S\ref{sec:mev:characterization} we introduce our proof technique in the form of a MEV characterization theorem.
We illustrate such technique in~\S\ref{sec:mev:airdrop}, by applying it to our Airdrop contract.

\subsection{Gain}
\label{sec:mev:gain}

We define the adversarial gain between two system states as the difference of the adversarial wallets. A simple transitivity-like property follows.

\vbox{
\begin{lstlisting}
def gainState (σ σ' : sys.sysState) : ℝ :=
  sys.tokenValue σ'.Δ - sys.tokenValue σ.Δ
  
theorem gainState_trans (σ σ' σ'' : sys.sysState) :
  gainState sys σ σ' + gainState sys σ' σ'' 
  = gainState sys σ σ''
\end{lstlisting}
}

In order to define the gain achieved by the adversary by performing a \emph{list} of moves,
we first generalize the \lstinline{semantics} of adversarial moves. 
The effect of a list of moves is the composition of the individual effect of each move in the list, discarding those that fail.
For brevity, we omit the definition of this function --- hereafter, we will similarly omit long definitions, referring to
\iftoggle{anonymous}{\cite{mev-lean-github-anonymous}}{\cite{mev-lean-github}} for their Lean code.
\begin{lstlisting}
def semMoves (σ : sys.sysState)
  (tr : List sys.Move) : sys.sysState
\end{lstlisting}

The adversarial gain of a list of moves is given by comparing
the initial and final states.
\begin{lstlisting}
def gainMoves 
    (σ : sys.sysState) (tr : List sys.Move) : ℝ := 
  gainState sys σ (semMoves sys σ tr)
\end{lstlisting}

We establish a few basic properties of \lstinline{gainMoves}:
the empty list gives zero gain, and the gain of the first move in a list can be separated from the gain of the rest of the moves.

\vbox{
\begin{lstlisting}
theorem gainMoves_empty (σ : sys.sysState) :
  gainMoves sys s [] = 0
  
theorem gainMoves_step (σ : sys.sysState)
  (m : sys.Move) (ms : List sys.Move) :
    gainMoves sys σ (m :: ms)
    = match sys.semantics σ m with
    | .none    => gainMoves sys σ ms
    | .some σ' => gainState sys σ σ' + gainMoves sys σ' ms
\end{lstlisting}
}
The gain of a list of moves is bounded above by the value of all the circulating honest tokens.
This will be exploited to obtain an upper bound to the extractable value.

\vbox{
\begin{lstlisting}
theorem gainMoves_bound (σ : sys.sysState)
    (tr : List sys.Move) :
  gainMoves sys σ tr ≤ sys.tokenValue (sys.honTokens σ.s)
\end{lstlisting}
}

\subsection{MEV definition}
\label{sec:mev:mev}

A system state \lstinline{σ} has MEV equal to \lstinline{v} when two conditions are met:
\begin{enumerate*}[$(i)$]
    \item there is a list of moves \lstinline{tr} that, when performed in \lstinline{σ}, gives the adversary a gain of \lstinline{v};
    \item any list of moves \lstinline{tr}, when performed in \lstinline{σ}, gives the adversary a gain of at most \lstinline{v}:
\end{enumerate*}

\vbox{
\begin{lstlisting}
structure MEV (σ : sys.sysState) (v : ℝ) : Prop where
  trace_reaches_v :    ∃ tr, gainMoves sys σ tr = v
  other_traces_worse : ∀ tr, gainMoves sys σ tr ≤ v
\end{lstlisting}
}

\zunnote{Forse meglio spostare il MEVsup in limitations}
We remark that MEV is not always guaranteed to exist. 
For instance, if the adversary can extract any real value $< 1$, but they cannot extract $1$, there is no \emph{maximum} value that can be extracted, but only a least upper bound. 
To account for this, we also define the ``least upper bound of the extractable values'' \lstinline{MEVsup}. The definition only requires traces whose gain is arbitrarily close to \lstinline{v}.
We prove that \lstinline{MEVsup} always exists, is unique, and non-negative.
Further, when \lstinline{MEV} exists, it coincides with \lstinline{MEVsup}, so \lstinline{MEV} is unique and non-negative. 

\bartnote{forse sacrificabile la def Lean}

\vbox{
\begin{lstlisting}
structure MEVsup (σ : sys.sysState) (v : ℝ) : Prop where
  traces_approx : ∀ (ε : ℝ+), ∃ tr,
    gainMoves sys σ tr + ε ≥ v
  other_traces_worse : ∀ tr, gainMoves sys σ tr ≤ v
\end{lstlisting}
}


\subsection{Proving MEV}
\label{sec:mev:characterization}

In principle, one could attempt to prove that a system \lstinline{sys} in state \lstinline{σ} has a given MEV equal to \lstinline{v} by directly applying the definition of \lstinline{MEV}.
Doing that would require to 
\begin{enumerate*}
\item find a trace giving $\Adv$ a gain \lstinline{v}, and 
\item show that no other trace can extract more value.
\end{enumerate*}
The second task, in particular, is quite hard, as it requires reasoning on \emph{all} the (infinitely many) adversarial traces.
To address this challenge, we introduce a \emph{MEV characterization theorem} that provides a principled proof technique for establishing that a given value indeed coincides with the MEV.
More precisely, using \lstinline{MEV.characterization} theorem (\Cref{fig:lean:MEV.characterization}) requires following these steps:
\begin{enumerate}

    \item Fix the system state \lstinline{σ} from which the adversary is going to extract MEV.

    \item Specify an \keyterm{invariant} \lstinline{inv} on system states. The invariant must be accompanied with a proof ensuring that
    it holds on \lstinline{σ} (\lstinline{invariant_init}), and
    that it is preserved by any adversarial move (\lstinline{invariant_sound}).

    \item Specify a \keyterm{guess function} \lstinline{MEV_guess} that maps any state satisfying the invariant to a value in \lstinline{ℝ≥0}, which is a candidate MEV for that state.

    \item Find an adversarial trace that extracts from
    \lstinline{σ} exactly \lstinline{MEV_guess σ}. 
    This ensures that, on the system state \lstinline{σ}, the guess function provides a lower bound to the MEV.
    We call this property \keyterm{coherence} (\lstinline{MEV_guess_coherent}).

    \item Prove that the guess function is an upper bound for MEV. 
    More precisely, we must prove that, if \lstinline{σ₀} is any (invariant-abiding) system state, then moving to another state \lstinline{σ₁} cannot induce a gain for the adversary that is greater than the difference between the guesses.
    We call this property \keyterm{soundness} (\lstinline{MEV_guess_sound}):
    \[\small
    \texttt{MEV\_guess}\ \sigma_0
    \geq
    \texttt{gainState}\ \sigma_0\ \sigma_1 + \texttt{MEV\_guess}\ \sigma_1
    \]
    
\end{enumerate}

Once all the steps above are completed, our characterization theorem establishes \lstinline{MEV σ (MEV_guess σ)}, proving that the MEV in \lstinline{σ} is indeed the one given by the guess function.

The choices of the invariant and of the guess function are \emph{crucial}.
Intuitively, the invariant defines an over-approximation of the states that can be reached from \lstinline{σ} by the adversary.
Choosing a suitable invariant is key to simplify the subsequent steps, since they involve properties that are only required to hold on invariant-abiding states.
In our experience, simple invariants are enough to this purpose:
\eg, both in the CoinPusher and in the AMM, it is enough to impose a bound on the size of the mempool. 
The other crucial step is choosing the guess function. Roughly, this corresponds to estimate the value extracted by an adversary following the optimal  strategy.
Fortunately, this estimate must only be provided for the states satisfying the invariant, simplifying this task.  

\begin{figure}[t]
\begin{lstlisting}[frame=single]
theorem MEV.characterization
  (σ : sys.sysState)
  (inv : sys.sysState → Prop)
  (MEV_guess : (σ : sys.sysState) → inv σ → ℝ≥0)
  (invariant_init : inv σ)
  (invariant_sound : 
    ∀ {m : sys.Move} {σ₀ σ₁ : sys.sysState},
    sys.semantics σ₀ m = some σ₁ →
    inv σ₀ → inv σ₁)
  (MEV_guess_coherent : ∃ tr : List sys.Move,
    gainMoves sys σ tr = MEV_guess σ invariant_init)
  (MEV_guess_sound :
    ∀ {m : sys.Move} {σ₀ σ₁ : sys.sysState},
    (hmove : sys.semantics σ₀ m = some σ₁) →
    (σ₀_inv : inv σ₀) →
    let σ₁_inv := invariant_sound hmove σ₀_inv
    gainState sys σ₀ σ₁ + MEV_guess σ₁ σ₁_inv ≤ MEV_guess σ₀ σ₀_inv) :
  MEV sys σ (MEV_guess σ invariant_init)
\end{lstlisting}
\negcaptionspace
\caption{MEV characterization theorem (soundness).}
\label{fig:lean:MEV.characterization}
\end{figure}

Our Lean formalization of the MEV characterization theorem closely follows the previous informal discussion, with 
the only technical difference that one more argument must be passed to the guess function to ensure the invariant holds.

Below, we sketch the proof for \lstinline{MEV.characterization}.
\begin{proof}[Proof (sketch)]
We have to prove that \lstinline{MEV_guess σ} is the MEV in \lstinline{σ}.
The item \lstinline{traces_reaches_v} follows directly from \lstinline{MEV_guess_coherent}.
For \lstinline{other_traces_worse}, consider an arbitrary trace
$m_0,\ldots,m_{n-1}$ of moves starting from $\sigma_0=\sigma$ and going through states $\sigma_1, \sigma_2, \ldots, \sigma_n$.
We have that:

{\small
\begin{align*}
    & \texttt{gainMoves}\ \sigma\ [m_0,\ldots, m_{n-1}]
    \\
    = & \phantom{+} \texttt{gainState}\ \sigma\ \sigma_1 
    \\
    & +\cdots
    \\
    &+ \texttt{gainState}\ \sigma_{n-2}\ \sigma_{n-1}
    \\
    &+ \texttt{gainState}\ \sigma_{n-1}\ \sigma_n
    \\
    \leq & \phantom{+} \texttt{gainState}\ \sigma\ \sigma_1 
    \\
    & +\cdots
    \\
    &+ \texttt{gainState}\ \sigma_{n-2}\ \sigma_{n-1}
    \\
    &+ \texttt{gainState}\ \sigma_{n-1}\ \sigma_n
    \\
    &+ \texttt{MEV\_guess}\ \sigma_n
    && \text{since $\texttt{MEV\_guess}\ \sigma_n \geq 0$}
    \\
    \leq & \phantom{+} \texttt{gainState}\ \sigma\ \sigma_1 
    \\
    & +\cdots
    \\
    &+ \texttt{gainState}\ \sigma_{n-2}\ \sigma_{n-1}
    \\
    &+ \texttt{MEV\_guess}\ \sigma_{n-1}
    && \text{by \lstinline{MEV_guess_sound}}
    \\
    \leq & \phantom{+} \texttt{gainState}\ \sigma\ \sigma_1 
    \\
    & +\cdots
    \\
    &+ \texttt{MEV\_guess}\ \sigma_{n-2}
    && \text{by \lstinline{MEV_guess_sound}}
    \\
    \leq & \phantom{+} \cdots
    \\
    \leq & \phantom{+} \texttt{MEV\_guess}\ \sigma 
    && \text{by \lstinline{MEV_guess_sound}}
\end{align*}
}
%
%
%
%
In the chain of inequalities above we repeatedly apply the \lstinline{MEV_guess_sound} inequality, replacing at each step the last part of the sum $\texttt{gainState}\ \sigma_{i}\ \sigma_{i+1} + \texttt{MEV\_guess}\ \sigma_{i+1}$ with its upper bound
$\texttt{MEV\_guess}\ \sigma_{i}$. At the end, the whole sum
is reduced to $\texttt{MEV\_guess}\ \sigma$.
\end{proof}

The \lstinline{MEV.characterization} theorem ensures that if there exists a  sound and coherent guess function, then that function actually provides the MEV.
We also show that our characterization is \emph{complete}: whenever MEV exists, there also exists a sound and coherent guess function (\Cref{fig:lean:MEV.characterization_complete}).

\begin{figure}[t]
\begin{lstlisting}[frame=single]
theorem MEV.characterization_complete
  (σ : sys.sysState)
  (inv : sys.sysState → Prop)
  (invariant_init : inv σ)
  (invariant_sound : 
    ∀ {m : sys.Move} {σ₀ σ₁ : sys.sysState},
    sys.semantics σ₀ m = some σ₁ →
    inv σ₀ → inv σ₁)
  (mev : ∀ σ', inv σ' → ∃ v, MEV sys σ' v) :
  ∃ guess : (σ' : sys.sysState) → inv σ' → ℝ≥0,
    -- guess is coherent
    (∃ tr : List sys.Move,
      gainMoves sys σ tr = guess σ invariant_init) ∧
    -- guess is sound
    (∀ {m : sys.Move} {σ₀ σ₁ : sys.sysState},
      (hmove : sys.semantics σ₀ m = some σ₁) →
      (σ₀_inv : inv σ₀) →
      let σ₁_inv := invariant_sound hmove σ₀_inv
      gainState sys σ₀ σ₁ + guess σ₁ σ₁_inv ≤
      guess σ₀ σ₀_inv)
\end{lstlisting}
\negcaptionspace
\caption{MEV characterization theorem (completeness).}
\label{fig:lean:MEV.characterization_complete}
\end{figure}

\begin{proof}[Proof (sketch)]
Since by assumption MEV exists, we choose as guess function the one that maps each (invariant-abiding) state to its MEV.
Coherence is straightforward, since the definition of MEV (\lstinline{trace_reaches_v}) ensures that there is a trace extracting that value.

For soundness, let \lstinline{m} be a move from state \lstinline{σ₀} to \lstinline{σ₁}. We also let \lstinline{tr} be the trace providing MEV for \lstinline{σ₁} (\lstinline{trace_reaches_v}), 
hence \lstinline{gainMoves tr = guess σ₁}.
The definition of MEV (\lstinline{other_traces_worse}) ensures that the value extracted by any trace \lstinline{tr'} from \lstinline{σ₀} is bounded by the MEV. 
In particular, choose \lstinline{tr' = m::tr}, obtaining \lstinline{gainMoves σ₀ (m::tr) ≤ guess σ₀}. 
From this and theorem \lstinline{gainMoves_step}, we conclude:
\begin{align*}
& \mbox{\lstinline{gainState σ₀ σ₁ + guess σ₁}}
\\
= \; & \mbox{\lstinline{gainState σ₀ σ₁ + gainMoves tr}}
\\
= \; & \mbox{\lstinline{gainMoves σ₀ (m::tr)}} 
\\
\leq\; & \mbox{\lstinline{guess σ₀}}
\tag*{\qedhere}
\end{align*}
\end{proof}

\zunnote{Forse si potrebbe definire un tipo "guess function" che contiene funzione, coerenza e soundness.}



Overall, formalizing our system model and the MEV-related theorems required ${\sim}$600 lines of Lean code.

\subsection{Example: MEV of the Airdrop contract}
\label{sec:mev:airdrop}

We now analyse the MEV of the Airdrop contract from~\S\ref{sec:model:airdrop}.
To this purpose, we exploit our MEV characterization theorem.
We start by defining an invariant. For this basic contract, a trivial invariant suffices.
\begin{lstlisting}
def Airdrop_invariant (σ : ADState) : Prop := True
\end{lstlisting}

We then define our guess function. As intuition suggests, we guess that the MEV is obtained by transferring the entire contract balance to the adversary.

\vbox{
\begin{lstlisting}
def Airdrop_MEV_guess (σ : ADState)
  (σ_inv : Airdrop_invariant σ) : ℝ≥0 := σ.s.bal .τ₀
\end{lstlisting}
}

Finally, by leveraging our MEV characterization theorem, we establish that our guess is indeed the MEV.

\vbox{
\begin{lstlisting}
theorem MEV_Airdrop (σ : ADState) :
  MEV Airdrop σ (Airdrop_MEV_guess σ True.intro)
\end{lstlisting}
}

\section{MEV of the CoinPusher contract}
\label{sec:coinpusher}

We now formalize the CoinPusher contract we described in~\S\ref{sec:background:coinpusher}. Some parts are similar to the Airdrop contract: \eg, types \lstinline{Participant} and \lstinline{Token} are the same, since we use the same participants and token types.
Transactions are instead modelled by the new type:

\vbox{
\begin{lstlisting}
inductive Tx
| push (P : Participant) (v : ℝ+) : Tx
\end{lstlisting}
}
A transaction \lstinline{push P v} sends \lstinline{v} tokens (of type \lstinline{τ₀}) to the contract from participant \lstinline{P}, causing \lstinline{P} to win the entire contract balance if \lstinline{v} tips such balance over the threshold.

Then, we let the adversary send only their own tokens:
\begin{lstlisting}
inductive advTx : Tx → Prop
| push {x} : advTx (.push .Adv x)
\end{lstlisting}

The contract state contains the same fields as in the Airdrop contract, plus an additional field to represent a generic \lstinline{threshold} (which we chose to be $100$ in~\S\ref{sec:background:coinpusher}).

\vbox{
\begin{lstlisting}
structure State where
  threshold : ℝ+
  bal : Token → ℝ≥0
  wal : @Wallet Token
  mempool : AssocList TxId Tx
\end{lstlisting}
}

We denote the associated system state as \lstinline{CPState}.
\begin{lstlisting}
abbrev CPState := @SysState Token State
\end{lstlisting}

The type \lstinline{Move} of adversarial moves is identical to that for the
Aidrop contract: the adversary can append to the blockchain either one of its own transactions or a transaction from the mempool.
The semantics of a \lstinline{Move} defines the contract behaviour.
An adversarial \lstinline{drop} sends tokens from the adversary to the contract, possibily making the adversary win the game. This updates \lstinline{Δ} and \lstinline{bal} accordingly.
Instead, a mempool move sends tokens from honest participants to the contract (if there are enough), possibily making the participants win the game. This updates \lstinline{wal} and \lstinline{bal} accordingly.

\vbox{
\begin{lstlisting}
def semTx (tx : Tx) (σ : CPState) : Option CPState := …
def semMove (σ : CPState) (m : Move) : Option CPState :=…
\end{lstlisting}
}

We can finally define the CoinPusher contract
\begin{lstlisting}
def CoinPusher: System := …
\end{lstlisting}

We now turn to the study of MEV. Unlike for the Aidrop contract, the MEV is now actually affected by the mempool. 
We therefore choose to establish MEV in two distinct cases:
\begin{itemize}
\item the case where the mempool is empty, and
\item the case where the mempool contains one transaction.
\end{itemize}
As we will see, the general case where the mempool contains $N$ transactions is a simple extension of these two cases: roughly, the strategy of the adversary is to apply iteratively $N$ times the strategy for the singleton mempool. 
Because of this, we focus on the first two core cases.

For the former case, we declare the following invariant.
\begin{lstlisting}
def CoinPusher_empty_invariant (σ : CPState) : Prop :=
  σ.s.mempool.isEmpty
\end{lstlisting}

If there are no transactions in the mempool, the optimal strategy for the adversary is to \lstinline{push} enough tokens to trigger the win. For the sake of simplicity, below we choose to send \lstinline{σ.s.threshold} tokens, even if a smaller amount
\linebreak
(\lstinline{σ.s.threshold - σ.s.bal .τ₀}) would also suffice.
\begin{lstlisting}
def CoinPusher_strategy_empty (σ : CPState) :
    List Move := [.adv (.push .Adv σ.s.threshold) .push]
\end{lstlisting}

We define a guess function for the MEV, posing that we can extract the entire contract balance as MEV.
\begin{lstlisting}
def CoinPusher_empty_MEV_guess (σ : CPState) 
    (σ_inv : CoinPusher_empty_invariant σ) : ℝ≥0 :=
  σ.s.bal .τ₀
\end{lstlisting}

We can prove that our guess function is sound and coherent (exploiting our strategy). This makes it possible to invoke our MEV characterization theorem and establish MEV for the empty mempool case.

\vbox{%
\begin{lstlisting}
theorem MEV_CoinPusher_empty (σ : CPState) :
  σ.s.mempool = .nil →
  MEV CoinPusher σ (CoinPusher_empty_MEV_guess σ)
\end{lstlisting}
}

We now turn to the case where the mempool contains one transaction, \ie, it is a singleton list.
We use the following invariant. Note that we have to account for the mempool becoming empty after its transaction is appended, so the invariant actually requires that the mempool is either a singleton or empty. For the sake of simplicity, we also require that if the mempool is a singleton, its transaction is from the honest participants. Indeed, adversarial transactions in the mempool are irrelevant for the purposes of MEV, so we can ignore them. 
\vbox{%
\begin{lstlisting}
def CP_one_or_less_invariant (σ : CPState) : Prop :=
 σ.s.mempool.isEmpty ∨
 (∃ idx tx, σ.s.mempool = .cons idx tx .nil ∧ tx.P ≠ .Adv)
\end{lstlisting}
}

The guess function is more complex \wrt the empty mempool case. We define it by cases, according to the mempool:
\begin{enumerate}
\item If the mempool is empty, our guess is \lstinline{σ.s.bal .τ₀}, the same we mentioned earlier.
\item Otherwise, if the mempool is the singleton \lstinline{tx} (a \lstinline{push} transaction from the honest participants), the best strategy for the adversary is 
\begin{enumerate*}[$(i)$]
    \item first, trigger a win in the contract, emptying its balance,
    \item attempt to append \lstinline{tx}, 
    \item trigger the win again.
\end{enumerate*}
The first step extracts \lstinline{σ.s.bal .τ₀} tokens, emptying the contract balance.

The second step, appending \lstinline{tx}, might fail if the honest participant does not have enough tokens to execute it, \ie if 
\lstinline{tx.v > σ.s.wal .τ₀}, in which case it has no effect.
Further, even if \lstinline{tx} succeeds, it might send to the contract enough tokens to immediately trigger the win for the honest participant, if \lstinline{tx.v ≥ σ.s.threshold}.
In this case, the tokens are immediately sent back to the participant, and the transaction has no net effect, since the contract balance was empty.
Otherwise, \lstinline{tx} succeeds and sends fewer tokens than the threshold, loading the contract balance with \lstinline{tx.v} tokens.
Overall, while the second step does not make the adversary directly gain anything, it can potentially load the contract balance.

Finally, the third step extracts the new contract balance. This could be \lstinline{tx.v} or zero depending on whether \lstinline{tx} succeeded in loading the contract or not, respectively. 

Coherently with our strategy, we guess the MEV to be \lstinline{σ.s.bal .τ₀ + tx.v} or just \lstinline{σ.s.bal .τ₀}.

\end{enumerate}

\vbox{
\begin{lstlisting}
def CoinPusher_one_or_less_MEV_guess (σ : CPState)
    (σ_inv : CP_one_or_less_invariant σ) : ℝ≥0 :=
  match σ.s.mempool with
  | .nil => σ.s.bal .τ₀
  | .cons id tx .nil =>
    if tx.v < σ.s.threshold ∧ tx.v ≤ σ.s.wal .τ₀
    then σ.s.bal .τ₀ + tx.v
    else σ.s.bal .τ₀
  | .cons id tx (.cons id2 tx2 rest) => by
    exfalso; … -- contradicts σ_inv
\end{lstlisting}
}

\vbox{%
\begin{lstlisting}
def CP_strategy_singleton_mempool (σ : CPState)
    (id : TxId) (tx : Tx) : List Move :=
  [ .adv (.push .Adv σ.s.threshold ) .push
  , Move.mempool id
  , .adv (.push .Adv σ.s.threshold ) .push ]
\end{lstlisting}
}

We can prove that our guess function is sound and coherent (exploiting our strategy). This makes it possible to invoke our MEV characterization theorem and establish MEV for the singleton mempool case.
\vbox{%
\begin{lstlisting}
theorem MEV_CoinPusher_singleton
  (σ : CPState) (id : TxId) (tx : Tx)
  (singleton : σ.s.mempool = .cons id tx .nil)
  (nonadv : tx.part ≠ .Adv) :
  MEV CoinPusher σ
    (if tx.v < σ.s.threshold ∧ tx.v ≤ σ.s.wal .τ₀
     then σ.s.bal .τ₀ + tx.v
     else σ.s.bal .τ₀)
\end{lstlisting}
}

We remark that this result extends to the general case where the mempool contains any number of transactions.
The optimal strategy here is to attempt to append all the mempool transactions, while interleaving them with an adversarial transaction that triggers the win and empties the contract balance: \lstinline{[trigger₀, tx₁, trigger₁, tx₂, …, txₙ, triggerₙ ]}.

Overall, the formalization of the CoinPusher contract and the proofs to establish its MEV amount to ${\sim}$1000 lines of Lean code.
This required more effort than the Airdrop contract, because the value extraction strategy is no longer trivial, but did not pose a significant challenge.

\section{MEV of the AMM contract}
\label{sec:amm}

We now move to our main contribution, the formalization of an Automated Market Maker contract, which we described in~\S\ref{sec:background:amm}. 

For this contract, the \lstinline{Participant} type is akin to the previous examples, with one adversary, and infinitely many honest ones.
The AMM swaps exchange two different types of tokens:

\vbox{
\begin{lstlisting}
inductive Token
| τ₀ : Token
| τ₁ : Token
\end{lstlisting}
}

The only operations supported by our AMM are tokens swaps. We define transactions accordingly.

\vbox{
\begin{lstlisting}
inductive Tx
| swap (P : Participant) (v₀ : ℝ+)
       (τ : Token) (vmin : ℝ≥0) : Tx
\end{lstlisting}
}

When a \lstinline{swap P v₀ τ vmin} is executed, participant \lstinline{P} exchanges \lstinline{v₀} of their own \lstinline{τ} tokens with at least \lstinline{vmin} tokens of the other type in the AMM reserves.
If such an exchange is impossible, \ie when \lstinline{P} does not own enough tokens or when the AMM exchange rate would cause fewer than \lstinline{vmin} tokens to be exchanged, the transaction has no effect.

The contract state features the same fields as the Airdrop contract (\ie \lstinline{bal}, \lstinline{wal}, and \lstinline{mempool}). The resulting system state is then denoted with with \lstinline{AMMState}.

Again, the type \lstinline{Move} is analogous to the other examples: the adversary can either append a transaction from the mempool or one of their own. Like in the \lstinline{CoinPusher} example, an adversarial \lstinline{swap} updates \lstinline{∆} and \lstinline{bal}, while a mempool move exchanges tokens between a honest
participant and the contract, updating \lstinline{wal} and
\lstinline{bal} accordingly.

To study the MEV of the AMM contract, we focus on two cases depending on the initial state of the mempool:
\begin{itemize}
\item the mempool is empty, and
\item the mempool contains one transaction.
\end{itemize}
As we will see, the general case where the mempool contains
$N$ transactions is a simple extension of these two cases:
roughly, the strategy of the adversary is to apply iteratively
$N$ times the strategy for the singleton mempool. Because of
this, we focus on the first two core cases.

\subsection{Empty mempool}
\label{sec:amm:empty-mempool}

For the empty mempool case, we define a simple invariant:
\begin{lstlisting}
def empty_mempool_invariant (σ : AMMState) : Prop :=
  σ.s.mempool.isEmpty
\end{lstlisting}
We easily prove the invariant to be sound.

We now define our guess function for the empty mempool case. We start by defining \lstinline{extractable σ} as the value that can be extracted by rebalancing the AMM. This can be expressed with the following formula, which we derive from~\cite{BCL22fc}.

\vbox{
\begin{lstlisting}
def extractable (σ : AMMState) : ℝ≥0 :=
  ⟨ ( Real.sqrt (pr .τ₀ * σ.s.bal .τ₀) - 
      Real.sqrt (pr .τ₁ * σ.s.bal .τ₁) )^2, ... ⟩
\end{lstlisting}
}

We then define our guess function as \lstinline{extractable σ}.

\vbox{
\begin{lstlisting}
def AMM_MEV_guess (σ : AMMState) 
    (σ_inv : empty_mempool_invariant σ) : ℝ≥0 :=
  extractable σ
\end{lstlisting}
}

We finally prove that our guess is indeed the MEV.
\vbox{
\begin{lstlisting}
theorem MEV_empty_mempool (σ : AMMState)
  (σ_inv : σ.s.mempool = .nil) :
  MEV (AMM pr) σ (AMM_MEV_guess pr σ σ_inv)
\end{lstlisting}
}
This effectively guarantees that the best adversarial strategy, when the mempool is empty, is to rebalance the AMM.
Overall, formalizing the AMM contract required ${\sim}$700 lines of Lean code, while establishing MEV in the empty mempool case required further ${\sim}$500 lines. The main challenge here was to deal with the mathematical formulas that arise in the proofs.

\subsection{Non-empty mempool}
\label{sec:amm:nonempty-mempool}

We now turn to the case where the mempool contains one transaction \lstinline{tx}.
Here, our invariant actually requires that the mempool contains only \lstinline{tx} \emph{or is empty}, since \lstinline{tx} is consumed when it is fired.
Further, we also require that \lstinline{tx} is owned by a honest participant (\lstinline{tx.part ≠ .Adv}) and that \lstinline{tx} requires a minimum positive amount of tokens in exchange (\lstinline{tx.vmin > 0}).
Indeed, adversarial transactions in the mempool are pointless, since they can be crafted, so we can rule them out. 
Further, having a mempool transaction with \lstinline{tx.vmin = 0} leads to a corner case in which MEV does not exist%
\iftoggle{anonymous}{.}{(see \Cref{app:corner-case}).}
\begin{lstlisting}
def inv_one_or_less (σ : AMMState) : Prop :=
  σ.s.mempool.isEmpty ∨
  (∃ idx tx, σ.s.mempool = .cons idx tx .nil ∧
             tx.vmin > 0 ∧ tx.part ≠ .Adv)
\end{lstlisting}

We now define our guess function, which turns out to be significantly more complex than that for the empty mempool case.
We have to provide a guess for all the states satisfying the invariant, so we must handle mempools having size zero or one.
When the mempool is empty, we guess 
\lstinline{extractable σ}, coherently with the empty mempool case.
When there is one transaction \lstinline{tx} in the mempool, we proceed as follows.

We first check if \lstinline{tx} can be beneficial to the adversary.
This happens when:
\begin{enumerate}
    \item Transaction \lstinline{tx} can actually be executed in some AMM state, \ie, its honest sender actually owns the tokens \lstinline{tx} is trying to swap (\lstinline{tx.inputVal ≤ σ.s.wal tx.inputTok}).

    \item Transaction \lstinline{tx} causes its honest sender to transfer to the contract an amount of tokens whose value (\lstinline{tx.inputVal * pr tx.inputTok}) is larger than the value of the minimum amount of tokens it is requiring in exchange (\lstinline{tx.vmin * pr (other_tok tx.inputTok)}).
\end{enumerate}

If either of the above conditions does not hold, \lstinline{tx} is useless for the adversary, who can then disregard \lstinline{tx}. In this case, our guess is the same as the one for the empty mempool case, hence \lstinline{extractable σ}.

Instead, when both of the above conditions hold, we exploit \lstinline{tx} as follows.
First, the adversary employs its unlimited token reserves and performs a swap \lstinline{move}, bringing the AMM to a state \lstinline{σ'} where executing \lstinline{tx} will transfer to the honest participant exactly \lstinline{tx.vmin} tokens.
Intuitively, from the point of view of the honest participant, \lstinline{σ'} is the \emph{worst state} where \lstinline{tx} still can be executed.
Dually, \lstinline{σ'} is also the \emph{best state} for the adversary in which to execute \lstinline{tx}.
We refer to \lstinline{σ'} as the \emph{tight state} for \lstinline{tx}.
After the \lstinline{move}, having reached the tight state the adversary indeed appends \lstinline{tx} to the blockchain. After that, we reach a new state \lstinline{σ''} where the mempool is empty, so the adversary proceeds as in the empty mempool case.

We coherently define our guess function as the sum of:
\begin{enumerate}
\item the gain of \lstinline{move} (\lstinline{gainMoves (AMM pr) σ [move]});
\item the gain of \lstinline{tx}, which is zero, since it does not directly transfer tokens to or from the adversary;
\item the gain from the rebalancing, \ie, \lstinline{extractable σ''}.
\end{enumerate}

We remark that the gain of the first step may be negative. The adversary can therefore temporarily \emph{lose} value in that step, which makes the overall MEV strategy non-trivial.
In our Lean development, we first defined our guess function as having codomain \lstinline{ℝ}. We then showed that its result is always non-negative, allowing its codomain to be restricted to \lstinline{ℝ≥0}, as required by our
\lstinline{MEV.characterization} theorem.
This actually shows how the temporary loss of value for the adversary at the first step is then compensated by the next steps.

We finally remark that, as a corner case, it is possible that the initial state \lstinline{σ} is already tight for \lstinline{tx}. 
If so, no \lstinline{move} is needed, and we simply use \lstinline{extractable σ} as our guess.

\vbox{%
\begin{lstlisting}
def AMM_MEV_guess' (σ : AMMState)
    (σ_inv : inv_one_or_less σ) : ℝ :=
  match σ.s.mempool with
  | .nil => extractable pr σ
  | .cons id tx .nil =>
    if tx.inputVal ≤ σ.s.wal tx.inputTok ∧ 
       (tx.inputVal * pr tx.inputTok > 
        tx.vmin * pr (other_tok tx.inputTok))
    then ...
      let σ' := tight_state_from_mempool σ id tx ...
      let σ'' := state_after_tight σ id tx ...
      match move_from_to_state σ σ' with
      | .some move =>
        gainMoves (AMM pr) σ [move] + extractable pr σ''
      | .none => -- σ = σ', no move needed
        extractable pr σ''
    else
      extractable pr σ
  | _ => ... -- contradicts the invariant

def AMM_MEV_guess (σ : AMMState)
    (σ_inv : inv_one_or_less σ) : ℝ≥0 :=
  ⟨ AMM_MEV_guess' pr σ σ_inv , ... ⟩
\end{lstlisting}
}

We finally establish MEV for the AMM contract when the mempool contains a single transaction.
To do so, we prove our guess to be coherent, showing that its value can actually be extracted using a trace.
We also prove our guess to be sound, showing that no adversarial move can extract more value.

\vbox{%
\begin{lstlisting}
theorem MEV_singleton_mempool (σ : AMMState)
  (id : TxId) (tx : Tx) (vmin_pos : tx.vmin > 0)
  (singleton : σ.s.mempool = .cons id tx .nil)
  (part_nonAdv : tx.part ≠ .Adv) :
  let σ_inv : inv_one_or_less σ := ...
  MEV (AMM pr) σ (AMM_MEV_guess pr σ σ_inv)
\end{lstlisting}
}

Overall, establishing MEV in the singleton mempool case required ${\sim}$3200 lines of Lean code, on top of the base AMM model (${\sim}$700 lines) and the empty mempool case (${\sim}$500 lines).
The main challenge here was dealing with the complex extraction strategy, requiring the proofs to consider all the possible cases.
Further, the involved mathematical formulas were more complex, and required more effort to handle. 

\section{Limitations}
\label{sec:limitations}

In this~\namecref{sec:limitations} we discuss how our design choices affect the   practical applicability of the proposed Lean formalization. 

\paragraph*{Real \emph{vs.} integer arithmetic}

In our MEV formalization, we represent token balances and their market values as real numbers.
In practice, however, blockchain platforms use high-precision integers to encode such amounts.
For instance, in Ethereum it is common to employ the integer type \solcode{uint256}, since floating point types are not supported.
As a consequence, when a financial contract performs its numerical computations --- \eg, calculating the exchange rate in a \lstinline{swap} of an AMM --- the result is inevitably affected by a small rounding error.

Although these errors usually have a negligible economic impact on the real-world usage of the contract,
it is possible to craft artificial scenarios where the MEV differs substantially depending on whether rounding errors are present or not.
For instance, consider a slightly modified \lstinline{CoinPusher} where the winning threshold is $9991$, and a $0.1\%$ fee is subtracted from the value sent to the contract upon each call to \lstinline{push}. 
Consider a state where the contract has zero balance and the mempool contains a \lstinline{push} transaction where a honest participant is sending $10001$ tokens.
Using real numbers, the \lstinline{push} would compute the fee as $10.001$ and transfer $9990.999$ to the contract --- so, slightly less than the threshold required to trigger the win.
The adversary could extract a MEV of $9990.999$ by back-running the mempool transaction with another \lstinline{push} of $1.0001$ or more, triggering the win.
Instead, using integers, the \lstinline{push} would round the fee to \lstinline{10}, and transfer \lstinline{9991} token units to the contract balance --- \ie exactly the threshold value.
Here, the MEV is zero, since the mempool transaction would immediately trigger the win for the honest user, hence it is useless for the adversary.

Adapting our system model and MEV formalization to use integers instead of reals is easy,
%
but we anticipate that it would significantly complicate the reasoning needed to establish MEV for certain contracts.
For instance, in our AMM, performing two consecutive adversarial \lstinline{swap}s has the same effect as a combined single \lstinline{swap}, which simplifies the treatment.
This is no longer the case when rounding errors are taken into account.
On the positive side, using integers would ensure that  MEV always exists.
This does not hold with real numbers: an \lstinline{Airdrop} variant which allows to withdraw (\lstinline{drop}) any amount \emph{smaller} than its balance has no associated MEV.
Indeed, any strategy can be improved by adding one more \lstinline{drop} to grab a tiny amount of additional tokens.

\paragraph*{Proof automation}

While Lean does provide several tactics to automate the proof of certain kinds of mathematical goals, in our experience we often wished for more powerful arithmetic simplification tactics. 
For instance, the proofs for our AMM model often make use of inequalities involving square roots.  We managed to prove these only through several manual algebraic steps, invoking each time a suitable theorem from the mathematical library. This could change in the future as Lean improves its tactics.

\paragraph*{Adversary model}

Our system model keeps the adversary syntactically distinct from honest users. 
This choice requires a certain care when encoding smart contracts into our system model, especially for contracts implementing access control mechanisms. 
In general, the system designer must ensure that the \lstinline{Move} type and its associated semantics faithfully capture all the possible adversarial actions. Omitting even a single adversarial move could lead to overlooked attacks.

\section{Related work}
\label{sec:related}

\paragraph*{Analysis tools for MEV}


The seminal work~\cite{Babel23clockwork} was the first to propose a general definition of MEV and a tool to estimate MEV upper bounds.
Their verification technique is based on a symbolic semantics that over-approximates the set of execution paths and their reachable states. 
Based upon this symbolic semantics, they encode the problem of estimating a MEV bound as a reachability problem.
Compared to our work, this technique requires less manual effort, since the designer is only required to provide a specification of the contract (in their domain-specific language), and from that point the tool automatically performs the analysis of MEV.
A main drawback of the approach is the lack of scalability (as observed in~\cite{BabelJ0KKJ23ccs}): even for relatively simple contracts, the exponential blowup of the generated paths may lead the tool to exhaust the computational resources.  
Another difference with our work is that our notion of MEV is \emph{exact}: when $\texttt{MEV} \; \texttt{sys}\; \sigma \; v$ holds, it actually means that $v$ can be extracted, and there is no adversarial strategy that can extract more.
The technique in~\cite{Babel23clockwork} instead \emph{over}-approximates MEV, without guaranteeing that the verified upper bound can actually be extracted. 


The work~\cite{BabelJ0KKJ23ccs} approached the problem of MEV estimation from a different perspective, by proposing a machine-learning technique to
\emph{under}-approximate MEV.
Their algorithm takes as input a blockchain state and a mempool, and gives as output a sequence of transactions (containing both transactions crafted by the adversary and transactions from the mempool) that can be played by the adversary to extract value.
This approach is more scalable than~\cite{Babel23clockwork}, and allows the adversary to fine-tune the computational resources used by the algorithm in order to match the available hardware.
Of course this technique does not guarantee that the synthesised sequence of transactions is optimal: instead, our approach pursues the goal of certifying the optimality of the adversarial strategy. 


\paragraph*{Formalization of the adversary}

Our definition of MEV keeps the adversary distinct from the other participants --- so being similar in spirit to~\cite{Babel23clockwork}, where MEV is parameterized by the player who extracts it.
As noted in~\S\ref{sec:limitations}, this requires some extra care when modelling the contract behaviour, to avoid the risk of omitting some adversarial actions.
Some approaches to avoid fixing the identity of the adversary have been proposed in literature.
The work~\cite{Salles21formalization} defines an adversary-agnostic MEV by first considering the MEVs that can be extracted by any individual participant, and then taking the minimum.
As noted in~\cite{Salles21formalization}, this definition has some drawbacks when extracting value requires the adversary to make upfront payments. 
In such cases, the MEV is under-estimated as zero, since the minimum also covers the zero value that can be extracted by a ``poor'' adversary.
The notion of \emph{universal MEV} proposed in~\cite{BZ25fc} overcomes this problem by defining MEV as a game where honest players try
to minimize the damage, while adversaries try to maximize their gain.
A token redistribution mechanism ensures that the adversaries have enough wealth to execute their extraction strategy. 
The token redistribution is not completely equivalent to our wealthiness assumption about adversaries (\S\ref{sec:model:systate}), since the token to be distributed may not be sufficient for the attack. 
A wealthiness assumption more similar to ours was used in~\cite{BMZ24fc}, which introduces the notion of ``MEV of wealthy adversaries'' by taking the maximum MEV over all possible adversarial wallets.

\paragraph*{Analysis of the AMM contract}

Compared to real-world AMM implementations such as Uniswap~\cite{uniswapimpl}, our Lean formalization introduces a few simplifications, that overall contribute to keeping our proofs manageable.
%
A first simplification is that in Uniswap, when a user executes a \lstinline{swap}, part of the input tokens are paid to the protocol as a fee.
Studying the effect of swap fees on MEV would require a further complication of the adversarial strategies, who would need to minimize the impact of fees on their own swaps.
The Lean formalization in~\cite{Dessalvi25github} studies AMMs with swap fees, but unlike ours, it does not address MEV.
The additional burden introduced by swap fees is, however, already visible in~\cite{Dessalvi25github} in the analysis of arbitrage (\ie, the optimal swap in a zero-player game).
We expect swap fees to impact MEV analysis to an even greater extent.

Besides swaps, concrete AMM implementations typically allow users to deposit and withdraw tokens from the AMM.
In particular, deposit transactions can increase the profits obtainable from subsequent swaps~\cite{BCL22amm}, and can thus be leveraged by an adversary to amplifly MEV.
The work~\cite{BCL22fc} introduces an extended sandwich attack that combines deposit and swap transactions from the mempool with adversary-crafted transactions, showing that this may increase the extractable value.
Extending our proof to handle mempools containing arbitrary combinations of deposit and swap transactions would, however, cause a substantial explosion in the number of cases in our proof of MEV optimality.

A different Lean formalization of constant-product AMMs is presented in~\cite{Pusceddu24fmbc}.
Compared to our work --- which proposes a general framework for analysing MEV of arbitrary smart contracts --- the work~\cite{Pusceddu24fmbc} is specifically targeted on AMMs, and only deals with arbitrage.

Sandwich attacks on AMMs were originally analysed in~\cite{Zhou21high}, and later in~\cite{BCL22fc,Xu23csur,Wang23esorics,Park2023mansci,Kulkarni23wine}, among the others.
Compared to these works, ours provides the first machine-checked proof of optimality of sandwich attacks. 


\paragraph*{MEV countermeasures}

Given the relevance of MEV on the blockchain ecosystem, several techniques to mitigate its effects have been proposed in the past few years. 
Some of these techniques are applicable to arbitrary smart contracts~\cite{breidenbach2017hydra,Baum21iacr,Heimbach22aft,Canidio24mansci,Babel24arxiv}, while some others are specific to certain classes of contracts, such as AMMs~\cite{Ciampi22cscml,baump2dex}.
Formally verifying within our Lean framework that some of these techniques achieves the expected MEV reduction would be a challenging extension of our work.



\section{Conclusions}
\label{sec:conclusions}

We present a Lean framework for reasoning about MEV in smart contracts. The framework is blockchain-agnostic and can be instantiated to model different smart contract languages, making it applicable to a wide range of real-world scenarios.
Unlike prior tools that are focussed on under- or over- approximations, our proof technique can establish \emph{exact} MEV bounds, by identifying attacks that are realistically executable by an adversary and proving their optimality (no strictly more profitable attacks exist). 
We validate the framework by formalising representative DeFi protocols and mechanising their analysis in Lean.
In particular, we provide the first machine-checked proof of the optimality of sandwich attacks on Uniswap v2-based Automated Market Makers, demonstrating the effectiveness of our approach.

\iftoggle{anonymous}{}{\section*{Acknowledgment}

Work partially supported by project SERICS (PE00000014)
under the MUR National Recovery and Resilience Plan funded by the
European Union -- NextGenerationEU, and by PRIN 2022 PNRR project DeLiCE (F53D23009130001).}

\bibliography{main}
\bibliographystyle{IEEEtran}

\iftoggle{anonymous}
{}
{
    \appendix

\subsection{Non existence of MEV in the general case}
\label{app:corner-case}

In this appendix we consider the corner case in which MEV fails to exist for an AMM contract. 

Specifically, we consider an AMM contract, starting in a state \lstinline{σ} such that the mempool in \lstinline{σ} contains a single transaction \lstinline{tx} with \lstinline{tx.vmin = 0} and \lstinline{tx.inputValue = τ₀}.
In such a scenario we can prove that \lstinline{MEV σ} does not exist, and that \lstinline{MEVSup σ} is equal to 
\lstinline{extractable σ + tx.v * pr₀}.

We prove the existence of \lstinline{MEVSup σ} with two propositions: the first one establishes that the gain of the adversary is strictly bounded from above, while the second one establishes that the adversary can extract a value that is arbitrarily close to that upper bound.

\begin{prop}
The value that the adversary can gain by performing a sequence of transactions in \lstinline{σ} is strictly less than \lstinline{extractable σ + tx.v * pr₀}. In short, we have
\vbox{
\begin{lstlisting}
  ∀ tr, gainMoves σ tr < extractable σ + tx.v * pr₀
\end{lstlisting}
}
\end{prop}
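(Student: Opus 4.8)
Since MEV does not exist in this state, we cannot invoke \texttt{MEV.characterization} directly; the plan is to reuse the telescoping idea behind its soundness half, with a potential function that already accounts for the pending mempool transaction, and to isolate the single step in which $\texttt{tx}$ is fired so as to get a \emph{strict} inequality. Throughout, write $v := \texttt{tx.v}$ for the input amount of $\texttt{tx}$, and $\mathit{pr}_0,\mathit{pr}_1$ for the prices of $\tau_0,\tau_1$; we use that $\mathit{pr}_0,\mathit{pr}_1 > 0$ and that the AMM reserves stay strictly positive along every trace (a \texttt{swap} reverts unless its output is strictly below the relevant reserve).

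First I would fix the invariant stating that the mempool is empty or equals the singleton $\{\texttt{tx}\}$; it is trivially preserved, since the only available mempool move either reverts or consumes $\texttt{tx}$. I would then define the potential $\Phi(\sigma') := \texttt{extractable}\ \sigma' + c(\sigma')\cdot v\cdot\mathit{pr}_0$, where $c(\sigma') = 1$ if the mempool of $\sigma'$ still contains $\texttt{tx}$ and $c(\sigma') = 0$ if it is empty. Then $\Phi \ge \texttt{extractable} \ge 0$, and $\Phi(\sigma) = \texttt{extractable}\ \sigma + v\cdot\mathit{pr}_0$, since the mempool of $\sigma$ is the singleton.

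The crux is a per-move estimate: for every \emph{successful} move $\sigma_0 \xrightarrow{m} \sigma_1$,
\[
\texttt{gainState}\ \sigma_0\ \sigma_1 + \Phi(\sigma_1) \;\le\; \Phi(\sigma_0),
\]
and this is \emph{strict} precisely when $m$ is the successful firing of $\texttt{tx}$. For any other successful move --- an adversarial \texttt{swap}, or a mempool move that reverts --- the mempool (hence $c$) is unchanged, so the estimate collapses to $\texttt{gainState}\ \sigma_0\ \sigma_1 + \texttt{extractable}\ \sigma_1 \le \texttt{extractable}\ \sigma_0$, which is essentially the soundness lemma already proved for the empty-mempool guess (its proof, for an adversarial \texttt{swap}, does not inspect the mempool, so it transfers). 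For the firing of $\texttt{tx}$: here $\texttt{gainState}\ \sigma_0\ \sigma_1 = 0$, because $\texttt{tx}$ only moves tokens between the honest wallet and the AMM and never touches $\Delta$; the mempool passes from the singleton to empty; and, writing the reserves of $\sigma_0$ as $(r_0,r_1)$, the swap pays the honest sender $x_1 = v\,r_1/(r_0+v)$ units of $\tau_1$ and \emph{preserves the constant product} $r_0 r_1$. Using $\texttt{extractable}(R_0,R_1) = \mathit{pr}_0 R_0 + \mathit{pr}_1 R_1 - 2\sqrt{\mathit{pr}_0\mathit{pr}_1 R_0 R_1}$ and the invariance of the square-root term,
\[
\texttt{extractable}\ \sigma_1 = \texttt{extractable}\ \sigma_0 + v\cdot\mathit{pr}_0 - x_1\cdot\mathit{pr}_1,
\]
hence $\texttt{gainState}\ \sigma_0\ \sigma_1 + \Phi(\sigma_1) = \bigl(\texttt{extractable}\ \sigma_0 + v\cdot\mathit{pr}_0\bigr) - x_1\cdot\mathit{pr}_1 = \Phi(\sigma_0) - x_1\cdot\mathit{pr}_1$, and the slack $x_1\cdot\mathit{pr}_1$ is \emph{strictly positive}: $\mathit{pr}_1 > 0$, and $x_1 > 0$ because $\texttt{tx}$ can fire only when $r_1 > 0$ (otherwise the guard $x_1 < r_1$ fails) and $v > 0$.

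Finally I would telescope. For an arbitrary trace $\mathit{tr}$, let $\sigma = \sigma_0,\sigma_1,\dots,\sigma_n$ be the states visited by its successful moves; summing the per-move estimates gives $\texttt{gainMoves}\ \sigma\ \mathit{tr} = \sum_i \texttt{gainState}\ \sigma_i\ \sigma_{i+1} \le \Phi(\sigma_0) - \Phi(\sigma_n) \le \Phi(\sigma)$. To upgrade this to a strict bound I split on whether $\mathit{tr}$ fires $\texttt{tx}$ successfully (which happens at most once, since $\texttt{tx}$ is then consumed). If it does, the corresponding step contributes the strict estimate, so $\texttt{gainMoves}\ \sigma\ \mathit{tr} < \Phi(\sigma) = \texttt{extractable}\ \sigma + v\cdot\mathit{pr}_0$. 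If it does not, every move of $\mathit{tr}$ is of the non-firing kind, so $\texttt{extractable}$ alone is already a sound potential along $\mathit{tr}$, giving $\texttt{gainMoves}\ \sigma\ \mathit{tr} \le \texttt{extractable}\ \sigma < \texttt{extractable}\ \sigma + v\cdot\mathit{pr}_0$ since $v > 0$ and $\mathit{pr}_0 > 0$. Either way the claimed strict bound holds. I expect the main obstacle to be the $\texttt{tx}$-firing step and its bookkeeping: checking that this move leaves the constant product invariant (so the \texttt{extractable} increment is exactly $v\cdot\mathit{pr}_0 - x_1\cdot\mathit{pr}_1$), that reserves stay strictly positive so $x_1 > 0$, and that the record of whether $\texttt{tx}$ has been consumed is threaded through the telescoping so the strict step is counted exactly once; everything else reduces to the (mempool-agnostic) empty-mempool soundness lemma and to positivity of prices.
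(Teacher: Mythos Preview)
Your argument is correct, and the mathematical content coincides with the paper's: both split on whether \lstinline{tx} is actually fired, both use that every swap (adversarial or honest) preserves the constant product so that the $\sqrt{\,\cdot\,}$ term in \lstinline{extractable} is invariant, and both locate the strict slack in the positive quantity $x_1\cdot\mathit{pr}_1$ paid to the honest sender when \lstinline{tx} executes.

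The organisation differs. The paper decomposes the trace as $\texttt{tr}' \mathrel{::} (\texttt{mempool id}) \mathrel{::} \texttt{tr}''$, computes the gain of $\texttt{tr}'$ directly from the balance change, bounds the gain of $\texttt{tr}''$ by $\texttt{extractable}\ \sigma''$ via the empty-mempool MEV result, and then simplifies the resulting expression by hand. You instead package the same facts as a potential $\Phi = \texttt{extractable} + c\cdot v\cdot\mathit{pr}_0$ and a per-move inequality that is strict exactly at the \lstinline{tx}-firing step, then telescope. Your route is closer in spirit to the paper's own \lstinline{MEV.characterization} framework and is a bit more modular (the adversarial-swap step literally reuses the empty-mempool soundness lemma), while the paper's three-segment calculation is more explicit about the algebra. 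One small wording nit: a ``mempool move that reverts'' is not a successful move at all, so it never appears among your $\sigma_i\to\sigma_{i+1}$ steps; it is simply discarded by \lstinline{semMoves}, which is consistent with how you actually use it.
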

\begin{proof}
    Since the mempool contains a single transaction, we can assume that \lstinline{tr} either includes a single mempool move or none at all. Indeed, although theoretically \lstinline{tr} could include more than one mempool move, at most one of them can be executed succesfully, since the contract semantics would remove the transaction from the mempool after executing it. An unsuccessful move leaves the contract state unaltered, so we can prune these failing moves from \lstinline{tr}.

    The case in which \lstinline{tr} features no mempool transactions is simple: indeed this is akin to working in a contract with an empty mempool. 
    From our results on AMMs with empty mempool (\lstinline{MEV_empty_mempool}, \lstinline{AMM_MEV_guess} and \lstinline{exact_gain_simple}) we know that such a contract would have MEV equal to \lstinline{extractable σ}, and since both \lstinline{tx.v} and \lstinline{pr₀} are positive, we have:

    \vbox{
    \begin{lstlisting}
        gainMoves σ tr ≤ extractable σ
                        < extractable σ + tx.v * pr₀\end{lstlisting}}
    which proves our proposition.

    \smallskip
    Instead, if \lstinline{tr} contains the mempool move that executes \lstinline{tx}, we can write it as follows:
    
    \vbox{
    \begin{lstlisting}
        tr = tr' :: mempool id :: tr''\end{lstlisting}}
    where \lstinline{tr'} and \lstinline{tr''} are lists of adversarial moves, and \lstinline{id} is the transaction id associated to \lstinline{tx}.
    Thus we would have
    \begin{align}\label{eqn:appendix:gain_trace}
    \begin{split}
        \texttt{gainMoves } \sigma \texttt{ tr} =&  \texttt{ gainMoves } \sigma \texttt{ tr'}
        \\
                & + \texttt{gainMove } \sigma' \texttt{ (mempool id)} 
        \\
                & + \texttt{gainMoves } \sigma'' \texttt{ tr''} 
    \end{split}
    \end{align}
    where $\sigma'$ and $ \sigma''$ are such that $\xrightarrow{\texttt{tr'}} \sigma'$ and $\sigma' \xrightarrow{\texttt{tx}} \sigma''$.
    
    Now, let $b_0$ (respectively $b'_0$) be equal to the balance of $\sigma$ (respectively $\sigma'$) in $\tau_0$ and $b_1, b'_1$ be the balance of those contract states in $\tau_1$. 
    From this point onward we will also denote $\texttt{tx.v}$ as $v$. Finally $\texttt{pr}_0$ and $\texttt{pr}_1$ are the prices of tokens. 
    The adversarial gain can be calculated from the contract balances as follows:
    \[
        \texttt{gainMoves} \ \sigma \ \texttt{tr'} = \texttt{pr}_0 (b_0 - b'_0) +\texttt{pr}_1 (b_1 - b'_1)
    \]
    Since $\sigma' \xrightarrow{\texttt{tx}} \sigma''$ we can calculate the balance of $\sigma''$. In $\tau_0$ this balance is equal to $b'_0 + v$, and in $\tau_1$ it is $\dfrac{b'_0  b'_1}{b'_0 + v}$.
    
    To give a bound on \lstinline{gainMoves σ'' tr''} we notice that $\texttt{tr''}$ only contains adversarial moves. Therefore we can use the results on the MEV of AMMs with an empty mempool to conclude that
    \begin{align*}
        \texttt{gainMoves} \ & \sigma''  \ \texttt{tr''} \leq 
        \texttt{extractable} \  \sigma'' =
        \\ &=  \texttt{pr}_0 (b'_0 + v) + \texttt{pr}_1 \dfrac{b'_0  b'_1}{b'_0 + v}  -  2 \sqrt{ b'_0  b'_1  \texttt{pr}_0  \texttt{pr}_1} 
    \end{align*}

    Finally remembering that  \lstinline{tx} itself gains no token for the adversary, and that $b_0 b_1 = b'_0 b'_1$, we can use \Cref{eqn:appendix:gain_trace} to obtain that $\texttt{gainMoves} \ \sigma \ \texttt{tr}$
    must be smaller than
    \[
        \texttt{pr}_0 (b_0 + v) + \texttt{pr}_1 \dfrac{b'_0  b'_1}{b'_0 + v} +\texttt{pr}_1 (b_1 - b'_1)  -  2 \sqrt{ b_0  b_1  \texttt{pr}_0  \texttt{pr}_1} 
    \]
    which in turn can be rewritten as
    \[
         \texttt{extractable } \sigma + \texttt{pr}_0 v + \texttt{pr}_1 \dfrac{b'_0  b'_1}{b'_0 + v} - \texttt{pr}_1 b'_1  
    \]
    Since $\dfrac{b'_0}{b'_0 + v} < 1$ we have shown that every trace extracts a value that is below $\texttt{extractable } \sigma + \texttt{pr}_0 v$.
\end{proof}

\begin{prop}
There exists a sequence moves \lstinline{tr} that the adversary can perform in \lstinline{σ} order to gain an amount of value that is arbitrarily close to \lstinline{extractable σ + tx.v * pr₀}. In short, we have:
\end{prop}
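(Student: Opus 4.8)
The plan is to build, for any prescribed precision, an adversarial trace whose gain approaches \lstinline{extractable σ + tx.v * pr₀} from below, so that this value is approximable but never attained. Write $v$ for \lstinline{tx.v}, write $b_0, b_1$ for the AMM reserves of $\tau_0, \tau_1$ in $\sigma$, and $k = b_0 b_1$ for the invariant product. The key idea is that the adversary should \emph{front-run} the pending transaction \lstinline{tx} (whose input token is $\tau_0$) with a single large swap that pushes $n$ units of $\tau_0$ into the AMM, thereby draining its $\tau_1$ reserve. Since \lstinline{tx.vmin = 0}, the transaction \lstinline{tx} still executes in the resulting state, but returns to its honest sender only a negligible quantity of $\tau_1$; in effect \lstinline{tx} donates its $v$ units of $\tau_0$ to the AMM. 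The adversary then closes with an arbitrage rebalancing, exactly as in~\S\ref{sec:amm:empty-mempool}. Concretely, fixing $\varepsilon > 0$, I would consider the trace $\texttt{tr}_n = \texttt{move}_n :: (\texttt{mempool}\ \texttt{id}) :: \texttt{tr}''_n$, where $\texttt{move}_n$ is the adversarial \lstinline{swap} of $n$ units of $\tau_0$ (a legal move by wealthiness, and one that always succeeds since the $\tau_1$ returned stays strictly below the reserve), \lstinline{id} is the identifier of \lstinline{tx}, and $\texttt{tr}''_n$ is a rebalancing trace obtained from coherence of the empty-mempool guess (\lstinline{MEV_empty_mempool}, \lstinline{AMM_MEV_guess}) applied to the state $\sigma''_n$ reached after \lstinline{tx}, whose mempool is empty.

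Next I would compute the gain of $\texttt{tr}_n$. Decomposing with \lstinline{gainMoves_step} and \lstinline{gainState_trans}, and using that \lstinline{tx} is signed by a honest participant --- so it leaves the adversary's wallet unchanged and contributes gain $0$ --- one gets
\[
\texttt{gainMoves}\ \sigma\ \texttt{tr}_n \;=\; \texttt{gainState}\ \sigma\ \sigma'_n \;+\; \texttt{extractable}\ \sigma''_n,
\]
where $\sigma'_n$ is the state after $\texttt{move}_n$. I would then rewrite $\texttt{gainState}\ \sigma\ \sigma'_n$ as the value lost by the AMM balance along $\texttt{move}_n$, via token conservation (\lstinline{preserveTokens}) and additivity of \lstinline{tokenValue}, and expand $\texttt{extractable}\ \sigma''_n$; here the constant-product invariant --- preserved both by the adversarial swap and by \lstinline{tx} --- makes the square-root cross term in \lstinline{extractable} collapse to $\sqrt{\texttt{pr}_0\,\texttt{pr}_1\,k}$, which does \emph{not} depend on $n$. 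Collecting terms yields the closed form
\[
\texttt{gainMoves}\ \sigma\ \texttt{tr}_n \;=\; \texttt{extractable}\ \sigma \;+\; v\cdot\texttt{pr}_0 \;-\; \texttt{pr}_1\cdot x_1(n), \qquad x_1(n) \;=\; \frac{k\,v}{(b_0+n)(b_0+n+v)}.
\]
Since $x_1(n)\to 0$ as $n\to\infty$, I would pick $n$ large enough that $\texttt{pr}_1\cdot x_1(n) < \varepsilon$; then $\texttt{gainMoves}\ \sigma\ \texttt{tr}_n + \varepsilon > \texttt{extractable}\ \sigma + v\cdot\texttt{pr}_0$, which is exactly the required approximation property.

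The step I expect to be the main obstacle is the algebraic bookkeeping behind the closed form: every \lstinline{gainState} must be re-expressed through token conservation in terms of AMM balances, the balances must be tracked precisely across the adversarial swap and across \lstinline{tx}, and the constant-product invariant must be applied to cancel the otherwise $n$-dependent square root --- all inside Lean's arithmetic, which (as observed in~\S\ref{sec:limitations}) gives only limited automation for goals involving square roots. The auxiliary facts --- that $\texttt{move}_n$ is a valid adversarial move and that \lstinline{tx} indeed succeeds in $\sigma'_n$ --- are routine but still need to be discharged explicitly, and the ``$n\to\infty$'' argument has to be recast as: given $\varepsilon$, choose $n$, which is the usual shape of \lstinline{MEVsup}-style approximation proofs.
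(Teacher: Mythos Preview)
Your proof is correct and rests on the same core idea as the paper's: build a parametrised trace that front-runs \lstinline{tx} with a large $\tau_0$-swap, executes \lstinline{tx} (which, since \lstinline{tx.vmin = 0}, returns a vanishing amount of $\tau_1$ to its sender), then rebalances; the constant-product invariant collapses the square-root term in \lstinline{extractable} to the $n$-independent constant $2\sqrt{\texttt{pr}_0\,\texttt{pr}_1\,k}$, and the residual error $\texttt{pr}_1\cdot x_1(n)$ tends to zero.

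The one structural difference is that the paper uses a \emph{four}-move trace $m_\sigma :: \texttt{adv}(x) :: \texttt{tx} :: m_{\sigma''}$, first bringing the AMM to its balanced state $\overline{\sigma}$ with a dedicated move $m_\sigma$ and only then pushing the large amount $x$ of $\tau_0$; you instead fold these into a single swap of $n$ units straight from $\sigma$. Both routes yield the same closed form up to the reparametrisation $\overline{b}_0 + x \leftrightarrow b_0 + n$ in the vanishing denominator (recall $\overline{b}_0\,\overline{b}_1 = b_0 b_1 = k$), so your three-move trace is a genuine simplification: the preliminary rebalancing in the paper buys nothing here, since the constant product already makes the extractable term independent of where the large swap starts. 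The paper's decomposition may feel more modular --- it reuses the ``balance, act, balance'' pattern --- but your version reaches the same limit with one fewer move and correspondingly less bookkeeping.
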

\vbox{
\begin{lstlisting}
   ∀ (ε : ℝ+), ∃ tr, 
   gainMoves σ tr + ε ≥ extractable σ + tx.v * pr₀
\end{lstlisting}
}

\begin{proof}
To prove this proposition we will calculate the gain of a trace parametrized by the positive real value $x \in \mathbb{R}^+$.

First, we name some contract states and transactions.
\begin{itemize}
    \item  $\overline{\sigma}$ is the state in which the AMM is balanced, \ie the state in which 
    \[
        \overline{b}_0 \cdot \texttt{pr}_0 = \overline{b}_1  \cdot \texttt{pr}_1
    \]
    where $\overline{b}_0$ (respectively $\overline{b}_1$) is the balance of $\overline{\sigma}$ in $\tau_0$ (respectively $\tau_1$).

    \item $m_\sigma$ is the transaction that balances the AMM when applied in $\sigma$, \ie the transaction such that
    \[
        \sigma \xrightarrow{m_\sigma} \overline{\sigma}
    \]
    \item $\texttt{adv}(x)$ is an adversarial transaction that sends $x$ tokens of type $\tau_0$ to the contract.
    \item $\sigma'$ and $\sigma''$ are states such that
    \[
        \overline{\sigma} 
        \xrightarrow{\texttt{adv}(x)} \sigma' \xrightarrow{\texttt{tx}} \sigma''
    \]
    It easily follows from the definition of $\texttt{adv}(tx)$ that the balance of $\sigma'$ in $\tau_0$ is $\overline{b}_0 + x$, and that its balance in $\tau_1$ is $ \frac{\overline{b}_0 \overline{b}_1}{\overline{b}_0 + x} $.
    Similarly, from the definition of  $\texttt{tx}$, we have that the balance of $\sigma''$ is $\overline{b}_0 + x + \texttt{tx}.v$ in $\tau_0$ and $\frac{\overline{b}_0 \overline{b}_1}{\overline{b}_0 + x + \texttt{tx}.v} $ in $\tau_1$. 
    
    \item $m_{\sigma}$ is the transaction that balances the AMM when applied in $\sigma''$
\end{itemize}

We define  $\texttt{tr}(x)$ as the following list of moves: 
\[
   \texttt{tr}(x) := m_\sigma  :: adv(x) :: \texttt{tx} :: m_{\sigma''} 
\]

We will now calculate the amount of tokens that the adversary gains from appending $\texttt{tr}(x)$ to the blockchain.
\begin{itemize}
    \item $m_\sigma$ yields
    \[
        \texttt{pr}_0 (b_0 - \overline{b}_0) + \texttt{pr}_1 (b_1 - \overline{b}_1) 
    \]
    \item $\texttt{adv}(x)$ yields $\texttt{pr}_1 \left(\overline{b}_1 - \frac{\overline{b}_0 \overline{b}_1}{\overline{b}_0 + x} \right) - \texttt{pr}_0 x$ (a negative quantity).
    \item $\texttt{tx}$ is a mempool transaction, so it doesn't yield any gain for the adversary.
    \item Finally, $m_{\sigma''}$ yields $\texttt{extractable } \sigma''$ for the adversary which is equal to
    \[
        \texttt{pr}_0 (\overline{b}_0 + x + v) + \texttt{pr}_1 \dfrac{\overline{b}_0 \overline{b}_1}{\overline{b}_0 + x+ v} - 2\sqrt{\overline{b}_0 \overline{b}_1 \texttt{pr}_0 \texttt{pr}_1 } 
    \]

\end{itemize}
The gain of $tr(x)$, which is the sum of all the above gains, amounts to:
\begin{align*}
  & \texttt{pr}_0 (b_0 - \overline{b}_0) + \texttt{pr}_1 (b_1 - \overline{b}_1)  
  \\
  + \;  & \texttt{pr}_1(\overline{b}_1 - \dfrac{\overline{b}_0\overline{b}_1}{\overline{b}_0 + x} ) - \texttt{pr}_0 x
  \\
  + \; & \texttt{pr}_0 (\overline{b}_0 + x + v) + \texttt{pr}_1 \dfrac{\overline{b}_0 \overline{b}_1}{\overline{b}_0 + x+ v} - 2\sqrt{\overline{b}_0 \overline{b}_1 \texttt{pr}_0 \texttt{pr}_1 } 
\end{align*}
Since we are working with a constant product AMM, we have that $b_0 b_1 = \overline{b}_0 \overline{b}_1$, meaning that expression above can be simplified as 
\begin{align*}
    \texttt{pr}_0 b_0 
    \; + \; & \texttt{pr}_1 b_1 - 2\sqrt{b_0b_1 \texttt{pr}_0 \texttt{pr}_1}  + \texttt{pr}_0 v 
    \\ 
    + \; & \texttt{pr}_1 \overline{b}_0 \overline{b}_1 
    \left( \dfrac{1}{\overline{b}_0 + x + v} -  \dfrac{1}{\overline{b}_0 + x} \right)
\end{align*}
which is equal to 
\begin{gather*}
    \texttt{extractable } \sigma + \texttt{pr}_0 v
    - \texttt{pr}_1 \overline{b}_0 \overline{b}_1 v  \dfrac{1}{(\overline{b}_0 + x + v)(\overline{b}_0 + x)}
\end{gather*}
which gets arbitrarily close to $\texttt{extractable } \sigma + \texttt{pr}_0 v$ for large values of $x$.
Therefore, for all $\epsilon$ one can find a value of $x$ such that
\begin{lstlisting}
       gainMoves σ tr(x) + ε ≥ extractable σ + tx.v * pr₀
\end{lstlisting}
proving our proposition.
\end{proof}

}

\end{document}